\newcommand{\rar}{\rightarrow}
\newcommand{\bs}[1]{\boldsymbol{#1}}
\renewcommand{\div}{\operatorname{div}}
\definecolor{deepgreen}{cmyk}{1,0,1,0.5}
\newcommand{\al}{\alpha}
\newcommand{\p}{\partial}
\numberwithin{equation}{section}
\newtheorem{thm}{Theorem}[section]
\newtheorem{prop}[thm]{Proposition}
\theoremstyle{remark}
\newtheorem{assum}[thm]{Assumption}
\newtheorem{defn}[thm]{Definition}
\definecolor{green}{rgb}{0,0.8,0} 
\newcommand{\tr}{\textrm{tr}}
\newcommand{\bbE}{\mathbb E}
\newcommand{\bbR}{\mathbb R}
\newcommand{\mfm}{\bs{\mathsf{m}}}
\newcommand{\mfn}{\bs{\mathsf{n}}}
\newcommand{\mfv}{\bs{\mathsf{v}}}
\newcommand{\mfa}{\bs{\mathsf{a}}}
\newcommand{\mfb}{\bs{\mathsf{b}}}
\newcommand{\mfu}{\bs{\mathsf{u}}}
\newcommand{\mfM}{\bs{\mathsf{M}}}
\newcommand{\mfN}{\bs{\mathsf{N}}}
\newcommand{\mfF}{\bs{\mathsf{F}}}
\newcommand{\mfG}{\bs{\mathsf{G}}}
\newcommand{\mfA}{\bs{\mathsf{A}}}
\newcommand{\mfB}{\bs{\mathsf{B}}}
\newcommand{\tbsu}{\tilde{\bs u}}
\newcommand{\tbsv}{\tilde{\bs v}}
\newcommand{\tbsR}{\tilde{\bs R}}
\newcommand{\tbsd}{\tilde{\bs d}}
\newcommand{\msm}{\bs{\mathsf{m}}}
\newcommand{\msn}{\bs{\mathsf{n}}}
\newcommand{\msv}{\bs{\mathsf{v}}}
\newcommand{\msa}{\bs{\mathsf{a}}}
\newcommand{\msb}{\bs{\mathsf{b}}}
\newcommand{\msu}{\bs{\mathsf{u}}}
\newcommand{\msM}{\bs{\mathsf{M}}}
\newcommand{\msN}{\bs{\mathsf{N}}}
\newcommand{\msX}{\bs{\mathsf{X}}}
\newcommand{\msY}{\bs{\mathsf{Y}}}
\newcommand{\cl}{\mathcal}
\newcommand{\tens}{\otimes}
\begin{document}

\title[Evolving natural configurations]{Special Cosserat rods with rate-dependent evolving natural configurations}

\author{K. R. Rajagopal and C. Rodriguez}

\begin{abstract}
We present a nonlinear, geometrically exact, and thermodynamically consistent framework for modeling special Cosserat rods with evolving natural configurations. In contrast to the common usage of the point-wise Clausius-Duhem inequality to embody the Second Law of Thermodynamics, we enforce the strictly weaker form that the rate of \emph{total} entropy production is non-decreasing. The constitutive relations between the state variables and applied forces needed to close the governing field equations are derived via prescribing frame indifferent forms of the Helmholtz energy and the total dissipation rate and requiring that the state variables evolve in a way that maximizes the rate of total entropy production. Due to the flexibility afforded by enforcing a global form of the Second Law, there are two models obtained from this procedure: one satisfying the stronger form of the Clausius-Duhem inequality and one satisfying the weaker global form of the Clausius-Duhem inequality. Finally, we show that in contrast to other viscoelastic Cosserat rod models introduced in the past, certain quadratic strain energies in our model yield \emph{both} solid-like stress relaxation and creep.     
\end{abstract} 

\maketitle

\section{Introduction}

\subsection{A physical motivation}
Since DNA molecules and other biological fibers are constantly subjected to forces and torques during multiple important biological processes (e.g., transcription, replication and DNA packaging), it is of paramount importance to develop a simple and robust mathematical framework capable of capturing their essential mechanical response. For double-stranded DNA (dsDNA) molecules, certain piecewise defined elastic rod models (so called \emph{worm-like-chain models}) fit the experimental data well for isolated tensile forces below approximately 50 pN; see \cite{MarkoSiggia95, Smith96, Wang97, Bust2000}. Here, the continuum ``rod" is the virtual circular cylinder that the the double helix wraps around. As the tensile force enters the 50-60 pN range, the length of the molecule saturates near its contour length $L_0$, i.e., the longest theoretical length of the molecule. A class of elastic rod models capable of appropriately capturing the stretch-limiting behavior exhibited by dsDNA and other biological fibers was recently introduced by the authors in \cite{RAJROD22A}.    

The assumption that DNA can be described by an elastic rod is a rough approximation, and in most applications it is more appropriate to describe it using viscoelastic constitutive relations. For example, when applied tensile forces surpass approximately 65 pN, optical tweezers experiments on individual dsDNA molecules show that the dsDNA molecule undergoes a hysteretic and rate-dependent \emph{overstretching transition}; see \cite{Smith96, Wang97, RouzinaBloomfield2001}. The molecule stretches to approximately $1.8 L_0$, and its response approaches that of a single-stranded DNA molecule (ssDNA) as the force increases. After many years of debate and experiments, the main consensus appears to be that the overstretching transition is the product of three distinct molecular mechanisms: strand splitting, base-pair bonds melting, and conversion into ladder like structures (S-form DNA); see the reviews \cite{ZaltronetalReview, BustamantePrimer}. How exactly the strand splitting, bond melting and S-form conversion mechanisms are distributed throughout the numerous base pairs of the molecule can be quite challenging to track during the overstretching transition. The work \cite{Grossetal2011} presents another piecewise defined elastic rod model that better takes into account the coupling between twist and stretch and fits the experimental data well up to and including certain overstretching transitions. 

Although useful for fitting certain experimental data, the previously cited continuum rod models disregard the clear rate-dependent viscoelastic behavior exhibited by DNA (during, e.g., the overstretching transition) in favor of elastic rod models. In contrast to elastic rods that are characterized by a single fixed \emph{natural configuration}, the natural configuration of viscoelastic and inelastic rods can evolve when the body is undergoing a thermodynamic process.\footnote{One may
	view the \emph{natural configuration} of a body to be the configuration it
	would take upon the removal of all external stimuli. The notion of an evolving natural configuration was first discussed by Eckart \cite{EckartThermo48}; see Section 3 for more details on this notion for rods and \cite{RajagopalReport695} for more on this notion in continuum mechanics.} The evolution of the natural configuration is determined by the way in which energy is stored and entropy is produced by the rod, and the study of the overstretching transition of DNA and other viscoelastic phenomena would have to take this evolution into consideration.

With this motivation in mind, this work introduces a class of intrinsic, nonlinear, rate-dependent viscoelastic Cosserat rod models incorporating \emph{evolving natural configurations}. Moreover, these mathematical models are \emph{thermodynamically consistent}; the rate of total entropy production during an admissible motion of the rod is non-negative. Since we feel that our interpretation of thermodynamic consistency is much closer to the Second Law of Thermodynamics than the stronger point-wise Clausius-Duhem inequality commonly adopted, a brief discussion of the Second Law is warranted.

\subsection{Second Law of Thermodynamics}

For a process to be thermodynamically allowable, the process has to meet the restrictions imposed by the Second Law of Thermodynamics in its global form. As Eddington \cite{Eddington14} aptly observes, 
\begin{quotation}
\emph{The law that entropy always increases - the Second Law of Thermodynamics - holds, I think, the supreme position among the laws of Nature. If someone points out to you that your pet theory of the universe is in disagreement with Maxwell's equation - then so much the worse for Maxwell's equations. If it is found to be contradicted by observation - well these experimentalists do bungle things sometimes. But if your theory is found to be against the Second Law of Thermodynamics, I can give you no hope; there is nothing for it but to collapse in deepest humiliation...At present we can see no way in which an attack on the Second Law of Thermodynamics could possibly succeed, and I confess personally, I have no great desire that it should succeed in averting the final running down of the universe.}
\end{quotation}
There are various interpretations of the Second Law of Thermodynamics and we cannot discuss their equivalence or otherwise here. We briefly provide the main statements of the Second Law of Thermodynamics that are in place. 

The origins of the Second Law can be traced to the seminal works of Sadi Carnot \cite{Carnot} and it was built by Clausius, Kelvin, Caratheodory, Planck and others.\footnote{The original papers and extracts from the papers by the pioneers of the subject, translated into English, can be found in Magie \cite{Magie} and Kestin \cite{Kestin76}.} Clausius \cite{Kestin76} remarks: 
\begin{quotation}
\emph{Di Energie der Welt ist constant. Die Entropy der Welt strebt einen Maximum zu}.
\end{quotation}
This is usually translated as ``The Energy of the World is constant." The Entropy of the World tends towards a maximum" but often one finds the translation ``The Energy of the Universe is constant. This statement is a global statement concerning the total entropy of the World or Universe, and it does not prohibit the possibility that in some parts of the World or Universe the entropy decreases while in other parts the entropy increases in such a manner that the net entropy can increase. Also, the above statement that the entropy of the World tends towards a maximum does not say that as time tends to infinity, the total entropy of the World tends monotonically towards a maximum, but that is a tacit assumption, and this leads to one assuming that the time derivative of the total entropy of the World is non-negative. The more standard interpretation of the Second Law is that not all of the heat can be converted into work, and that some of the heat is consumed in changing the ``transformation content", namely the entropy of the body.\footnote{Clausius \cite{Clausius62} coined the term ``entropy" to describe the ``transformation content" in a body.} This was expressed by Clausius \cite{Clausius65} as: 
\begin{quotation}
	\emph{-in the production of work it may very well be the case that at the same time a certain quantity of heat is consumed and another quantity transferred from a hotter to a colder body, and both quantities of heat stand in a definite relation to the work that is done.}
\end{quotation}
Kelvin \cite{Thomson} interprets Clausius as stating:
\begin{quotation}
\emph{It is impossible for a self-acting machine, unaided by an external agency, to convey heat from one body to another at a higher temperature...It is impossible, by means of inanimate material agency, to derive mechanical effect from any portion of matter by cooling it below the temperature of the coldest of the surrounding objects.}\footnote{Kelvin \cite{Thomson} adds the following footnote: \emph{If this axiom be denied for all temperatures, it would have to be admitted that a self-acting machine might be set to work and produce mechanical effect by the sea or earth, with no limit but the total loss of heat from the earth and sea, or, in reality, from the whole material world.}}
\end{quotation}
Caratheodory \cite{Caratheodory09} did not provide a statement of the Second Law in terms of heat being converted to work. Instead, he remarks:
\begin{quotation}
\emph{In every arbitrarily close neighborhood of a given initial state there exists states which cannot be approached arbitrarily closely by adiabatic processes.}
\end{quotation}
Planck \cite{Planck} interprets the Second Law as:
\begin{quotation}
\emph{It is impossible to construct an engine which will work in a complete cycle, and produce no effect except of raising of a weight and cooling of a heat reservoir.}
\end{quotation}
It is important to recognize that all of these early works on the Second Law were all concerned with the conversion of heat into work. It was only later that the notion of ``entropy" was associated with the statistical explanation of the extent of disorder in a system. In fact, the early notions of the Second Law were within the context of classical thermodynamics wherein there is no concept of a field; notions such as temperature, energy, and entropy were each a single number associated with a homogeneous body as a whole. 

Today, the Second Law is interpreted as not only being applicable to the universe as a whole but also to an isolated system whose total entropy (a primitive concept) does not decrease when it undergoes a thermodynamic process. The entropy remains the same in a reversible process and is positive when it undergoes an irreversible process. The isolated system tends to equilibrium, and the entropy of the isolated system tends to a maximum, asymptotically in time. In the case of continuum thermodynamics, once again the Second Law is assumed to apply to an isolated system.  
Truesdell and Muncaster \cite{TruesdellMuncaster} state that:
\begin{quotation}
	\emph{The objectives of continuum thermomechanics stop far short of explaining the universe, but within that theory we may easily derive an explicit statement in some ways reminiscent of Clausius, but referring only to a modest object: an isolated body of finite size.}
\end{quotation}
However, in continuum thermodynamics, the Clausius-Duhem inequality takes the role of the Second Law (see Truesdell \cite{Truesdell52}), and it is invariably enforced in the local point-wise form. The global form of the Clausius-Duhem inequality states that the time derivative of the total entropy of the body, as a whole, at any time is greater than or equal to the sum of the entropy flux into the body due to heat flux and entropy flux into the body due to radiation: for all $t$, 
\begin{align}
	\frac{d}{dt} \int_{\kappa_t(\cl B)} \rho \eta  \,dv \geq -\int_{\p \kappa_t(\cl B)} \frac{\bs q \cdot \bs n}{\theta} da + \int_{\p \kappa_t(\cl B)} \frac{r}{\theta} dv. \label{eq:globalen}
\end{align} 
In \eqref{eq:globalen}, $\kappa_t(\cl B)$ is the current configuration of the three-dimensional body $\cl B$ at time $t$, $\bs n$ is the outward pointing unit normal to the boundary $\p \kappa_t(\cl B)$, $\rho$ is the density, $\eta$ is the specific entropy, $\bs q$ is the heat flux, and $r$ is the radiant heating.

The requirement that \eqref{eq:globalen} hold with $\kappa_t(\cl B)$ replaced by an arbitrary sub-part $\cl P_t \subseteq \kappa_t(\cl B)$, i.e.,
\begin{align}
	\frac{d}{dt} \int_{\cl P_t} \rho \eta  \,dv \geq -\int_{\p \cl P_t} \frac{\bs q \cdot \bs n}{\theta} da + \int_{\p \cl P_t} \frac{r}{\theta} dv \label{eq:localen}, \quad \forall \cl P_t \subseteq \kappa_t(\cl B), 
\end{align} 
is much more restrictive than \eqref{eq:globalen}. Assuming appropriate smoothness of the variables, the more restrictive requirement \eqref{eq:localen} is equivalent to the point-wise inequality: for all $\bs x \in \kappa_t(\cl B)$ and $t$,
\begin{align}
	\rho \dot \eta \geq -\div \Bigl (
	\frac{\bs q}{\theta}
	\Bigr ) + \frac{\tau}{\theta}, \label{eq:CDlocal}
\end{align} 
where $\dot{\mbox{}} = \frac{D}{Dt}$ is the material time derivative. Green and Naghdi \cite{GreenNaghdi77a} rendered the local form of the Clausius-Duhem inequality \eqref{eq:CDlocal} as an equality by introducing a term $\xi$ that signifies the specific rate of entropy production: 
\begin{align}
	\rho \xi := \rho  \dot \eta + \div \Bigl (\frac{\bs q}{\theta} \Bigr ) - \frac{r}{\theta},
\end{align} 
which they required to be non-negative, $\xi \geq 0$. Later Ziegler and collaborators (see \cite{Ziegler63, Ziegler72, ZieglerWeh87}) introduced two different requirements that essentially impose that the rate of dissipation must be orthogonal to the level surfaces of a dissipation function and that the rate of dissipation should be maximized. Since Ziegler \cite{Ziegler63} presumes that the entropy production is determined only by the velocities, and that forces have to be fixed during the maximization, his ideas are incapable of describing many phenomena that are observed. A detailed critical discussion can be found in the works by the first listed author and Srinivasa \cite{RajSriPRSL05, RajSriZAMP08, RajSriIJES19}. Nonetheless, the ideas introduced by Ziegler and co-workers was an important step on the development of constitutive relations in continuum thermodynamics.  

In \cite{RAJAGOPALSRI2000}, the first author and Srinivasa assumed forms for the Helmholtz potential and the point-wise rate of entropy production and enforced the maximization of the rate of entropy production to obtain constitutive relations for non-Newtonian fluids. Later, the first author and Srinivasa \cite{RajSriPRSL11} assumed forms for the Gibbs potential and the point-wise rate of entropy production, and once again required the maximization of entropy production to obtain constitutive relations. These works subsequently led to numerous studies for the development of constitutive relations in a wide swathe of areas, but all within the context that maximization be demanded at every point in the body; see e.g. \cite{RajSriZAMP04a, RajSriZAMP04b, RajSriBending05, RajSriJNNFM01, RaoRaj2002, KannanRaoRaj08, KannanRaj11, MalRajTum, BarotRao, MoonCuiRao, MalekPrusaSkirvanSuli18, BathBulMal21, BulMalRod22, PrusaRajTuma20, PrasadMoonRao21, SodCruRao15, SreeKannRaj21, SreeKannRaj23a, RajRod22,  SreeSrikKannRaj23}. 

\subsection{Main results and outline}
Using ideas from \cite{RAJAGOPALSRI2000}, this work introduces a class of intrinsic, nonlinear, rate-dependent viscoelastic Cosserat rod models incorporating the notion that natural configurations evolve. In the (special) Cosserat theory, the \emph{current configuration} of the rod at time $t$, is modeled by a one-dimensional curve in Euclidean space, 
\begin{align}
	[0,L] \ni s \mapsto \bs r(s,t), 
\end{align}
(the \emph{center line}) to which a right-handed collection of orthonormal vector fields $ \{ \bs d_k(\cdot) \}_{k = 1}^3$ (the \emph{directors}) is attached; see Figure \ref{fig:1}. 

\begin{figure}[t]
	\begin{center}
	\includegraphics[scale=.60]{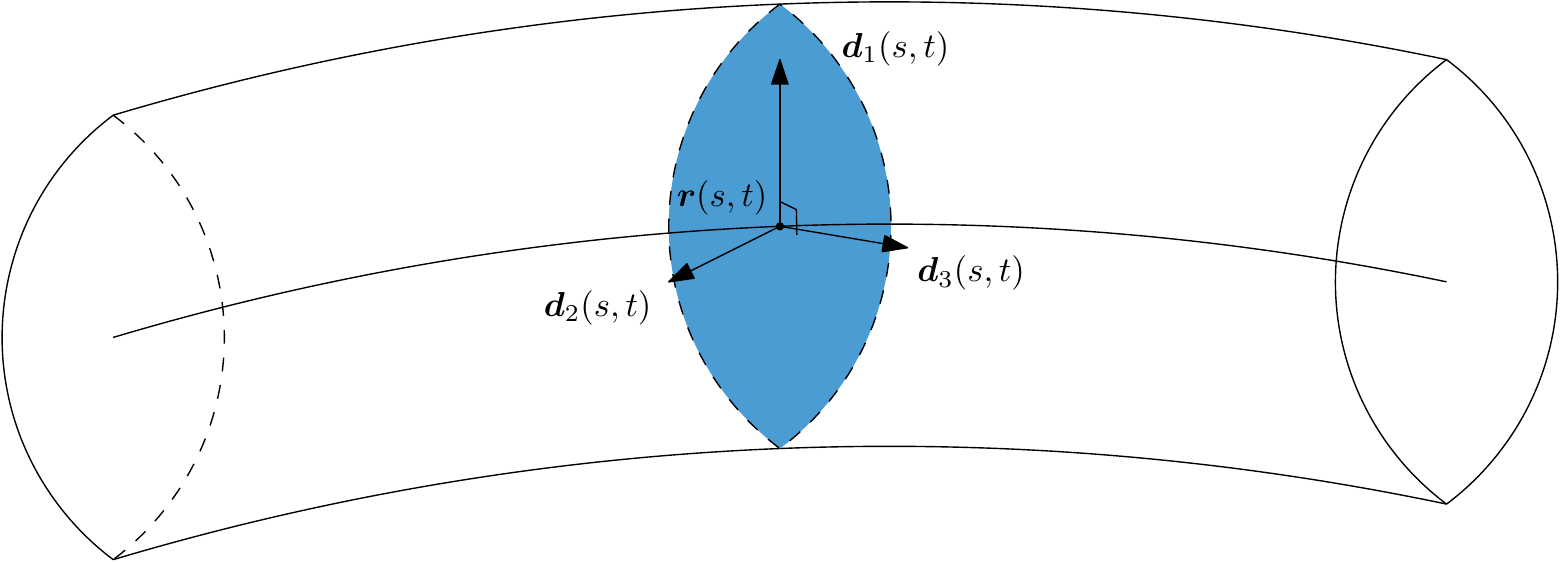}
	\caption{The current configuration of a special Cosserat rod.}
		\label{fig:1}
	\end{center}
\end{figure}

The directors $\{\bs d_1(s, t), \bs d_2(s, t)\}$ portray the orientation of the cross section transverse to the center line at $\bs r(s, t)$,   
can deform independently of the center line, and are specified by the Darboux vector field 
\begin{align}
	\bs u(s,t) = \frac{1}{2} \bs d_k(s,t) \times \p_s \bs d_k(s,t), \quad s \in [0,L],
\end{align}  
where here and throughout this work, we use the Einstein summation convention.  
The six components of $\bs u(s,t)$ and $\bs v(s,t) = \p_s \bs r(s,t)$ in the frame $\{\bs d_k(s,t)\}_{k = 1}^3$ are the local measures of strain in the standard theory. We then introduce primitive variables that express the resultant contact forces, contact couples, heat flux, internal energy and entropy of the rod and postulate balance laws of linear momentum, angular momentum, energy and entropy (see Section 2.2). In particular, the form of the Second Law required during the motion of the rod is that the rate of total entropy production is non-negative (see \eqref{eq:entropyeq}). In the isothermal setting that we consider, this requirement is equivalent to the inequality
\begin{align}
\int_0^L \bigl [\bs m \cdot \p_s \bs w + \bs n \cdot \p_{st}^2 \bs r - (\p_s \bs r \times \bs n) \cdot \bs w - \p_t \psi \bigr] ds \geq 0, \label{eq:diss2}  
\end{align}
where $\bs w = \frac{1}{2} \bs d_k \times \p_t \bs d_k$, $\bs m$ and $\bs n$ are the contact couple and contact force vector fields, and $\psi$ is the \emph{Helmholtz free energy} (to be specified a priori). 

In our model, we also include a second Darboux vector field $\bs u_d(\cdot,t)$ and tangent vector field $\bs v_d(\cdot, t)$ that specify the \emph{natural configuration} of the rod at time $t$ (see Section 3.1). The field equations are closed upon specifying \emph{constitutive relations} between $\bs m$, $\bs n$, $\bs u$, $\bs v$, $\bs u_d$ and $\bs v_d$ and postulating two evolution equations for $\bs u_d$ and $\bs v_d$. Following \cite{RAJAGOPALSRI2000}, we first propose frame indifferent forms of the Helmholtz free energy and rate of total entropy production (\emph{two scalar functions}) that is consistent with \eqref{eq:diss2} (see Section 3.2 and Section 4.1). 

In contrast to \cite{RAJAGOPALSRI2000} and the references listed in Section 1.2, we then obtain constitutive relations and evolution equations for the natural configuration by requiring the motion to maximize the \emph{rate of total entropy production} rather than point-wise rate of entropy production. Depending on the space of functions where maximization is required, in Section 4.2 we obtain one model wherein the integrand from \eqref{eq:diss2} is point-wise non-negative (see \eqref{eq:mconsteq}, \eqref{eq:nconsteq}, \eqref{eq:unatevol}, \eqref{eq:vnatevol}) and a second distinct model wherein the natural strain variables are uniform-in-$s$ (see \eqref{eq:mconsteq}, \eqref{eq:nconsteq}, \eqref{eq:unatevolunif}, \eqref{eq:vnatevolunif}). In the second development of the constitutive relations for the rod, \eqref{eq:diss2} still holds but the point-wise rate of entropy production is not necessarily non-negative (see Section 4.3).  

Finally, in Section 5, we consider the closed set of field equations for the simple problem of isolated torsion and a quadratic strain energy. In particular, we show that in contrast to other nonlinear, viscoelastic Cosserat rod models (see \cite{AntmanSeidman05, AntmanBook, LinnLangTuganov}), our model predicts \emph{both} of the common viscoelastic phenomena of solid-like stress relaxation (Section 5.2) and solid-like creep (Section 5.3).\footnote{The models introduced in \cite{AntmanSeidman05, AntmanBook, LinnLangTuganov} are able to incorporate solid-like creep behavior.}  

\section{Preliminaries}

\subsection{Kinematics}
Let $\bbE^3$ be three-dimensional Euclidean space with associated translation space identified with $\bbR^3$.  
Let $\{\bs e_k \}$ be a fixed right-handed orthonormal basis for $\bbR^3$.  The \emph{current configuration} of a special Cosserat rod at time $t$ is defined by a triple: 
\begin{align}
[0,L] \ni s \mapsto (\bs r(s,t), \bs d_1(s,t), \bs d_2(s,t)) \in \bbR^3 \times \bbR^3 \times \bbR^3,
\end{align}
with $\bs d_1(s,t)$ and $\bs d_2(s,t)$ orthonormal for each $s$ and $t$. The element $\bs r(s,t)$ is the position vector, relative to a fixed origin $\bs o \in \bbE^3$, for the \emph{center line} of the configuration at time $t$. The vectors $\{\bs d_1(s,t), \bs d_2(s,t)\}$ are the \emph{directors} and are regarded as tangent to the material cross section transversal to $\bs r(\cdot,t)$ at $\bs r(s,t)$ (see Figure \ref{fig:1}). We view $[0,L]$ as parameterizing the material points of the center line at each time $t$.

Let
\begin{align*}
	\bs d_3(s,t) = \bs d_1(s,t) \times \bs d_2(s,t).
\end{align*}
Then $\{ \bs d_k(s,t) \}$ is a right-handed orthonormal basis for $\bbR^3$ describing a configuration of the material cross section at $\bs r(s,t)$, and  
\begin{align}
	\bs R(s,t) &= \bs d_k(s,t) \otimes \bs e_k
\end{align}
is the unique proper rotation satisfying
\begin{align}
	\bs R(s,t) \bs e_k = \bs d_k(s,t), \quad k = 1,2,3. \label{eq:rotdir}
\end{align} 
Conversely, a proper rotation $\bs R(s,t)$ for each $(s,t)$ specifies directors $\{ \bs d_k(s,t)\}$ via \eqref{eq:rotdir}. Since the center line of a configuration can always be obtained uniquely (up to a spatial translation) from its tangent vector field, a configuration can be equivalently defined by a pair: 
\begin{align}
	[0,L] \ni s \mapsto (\p_s \bs r(s,t), \bs R(s,t)) \in \bbR^3 \backslash \{\bs 0\} \times SO(3). 
\end{align}

Since $\bs R(s,t)$ is a rotation for each $(s,t)$,
there exist unique axial vectors
\begin{align*}
	\bs u(s,t) &= u_k(s,t) \bs d_k(s,t) \in \bbR^3, \\
	\bs w(s,t) &= w_k(s,t) \bs d_k(s,t) \in \bbR^3,
\end{align*} 
such that for all $\bs a \in \bbR^3$, 
\begin{align}
	[\p_s \bs R(s,t)] \bs R^*(s,t) \bs a &= \bs u(s,t) \times \bs a, \\	
	[\p_t \bs R(s,t)] \bs R^*(s,t) \bs a &= \bs w(s,t) \times \bs a.
\end{align}
In particular, it follows that for $k = 1,2,3$,
\begin{align}
  \p_s \bs d_k(s,t) &= \bs u(s,t) \times \bs d_k(s,t), \\
  \p_t \bs d_k(s,t) &= \bs w(s,t) \times \bs d_k(s,t), 
\end{align}
or equivalently, 
\begin{align}
  \bs u(s,t) &= \frac{1}{2} \bs d_k(s,t) \times \p_s \bs d_k(s,t), \\
 \bs w(s,t) &= \frac{1}{2} \bs d_k(s,t) \times \p_t \bs d_k(s,t).
\end{align}
The components $u_1$ and $u_2$ are referred to as the \emph{flexural strains}, and the component $u_3$ is referred to as the \emph{torsional strain} (or \emph{twist}). The vector $\bs w(s,t)$ is referred to as the resultant \emph{angular velocity} of the material section at $\bs r(s,t)$. 

 In what follows we will denote the tangent vector field $\p_s \bs r(\cdot,t)$ by
 $\bs v(\cdot,t)$:  
\begin{align}
	\bs v(s,t) = \p_s \bs r(s,t) = v_k(s,t) \bs d_k(s,t). 
\end{align}  
The components $v_1$ and $v_2$ are referred to as the \emph{shear strains}. The component $v_3$ is referred to as the \emph{dilation strain}, and an orientation of the director $\bs d_3(\cdot,t)$ relative to the center line $\bs r(\cdot,t)$ is fixed by requiring that every  configuration satisfies, for all $(s,t)$, 
\begin{align}
 v_3(s,t) > 0. \label{eq:v3pos}
\end{align} 
The restriction \eqref{eq:v3pos} also implies that the \emph{stretch} of the rod in every  configuration is never zero, $|\p_s \bs r(s,t)| > 0$, and that the rod cannot be sheared so severely that a section becomes tangent to the center line.

\subsection{Balance laws}

Let $[a,b] \subseteq [0,L]$ and $t \in [0,\infty)$. We denote the contact force by $\bs n(s,t)$ so that the resultant force on the material segment $[a,b]$ by $[0,a) \cup (b,L]$ at time $t$ is given by 
\begin{align}
	\bs n(b,t) - \bs n(a,t).
\end{align}
The contact couple is denoted by $\bs m(s,t)$ so that the resultant contact couple about $\bs o \in \bbE^3$ on the material segment $[a,b]$ by $[0,a) \cup (b,L]$ is given by 
\begin{align}
	\bs m(b,t) + \bs r(b,t) \times \bs n(b,t) - \bs m(a,t) - \bs r(a,t) \times \bs n(a,t). 
\end{align}
For each $(s,t)$, the contact force and contact couple may be expressed in the director frame $\{\bs d_k(s,t)\}$ via 
\begin{align}
	\bs n(s,t) = n_k(s,t) \bs d_k(s,t), \quad \bs m(s,t) = m_k(s,t) \bs d_k(s,t). 
\end{align}
The components $n_1$ and $n_2$ are referred to as the \emph{shear forces}, and the component $n_3$ is referred to as the \emph{tension} (or \emph{axial force}). The components $m_1$ and $m_2$ are referred to as the \emph{bending couples} (or \emph{bending moments}), and the component $m_3$ is referred to as the \emph{twisting couple} (or \emph{twisting moment}).

We denote the internal energy per unit reference length and entropy per unit reference length of the rod by $e(s,t)$ and $\eta(s,t)$ respectively. The heat flux is denoted by $q(s,t)$ so that the resultant heat supplied to the material segment $[a,b]$ by $[0,a) \cup (b,L]$ is given by 
\begin{align}
	-(q(b,t) - q(a,t)).
\end{align}
We denote the absolute temperature by $\theta(s,t) > 0$. Similar to Green and Nagdhi \cite{GreenNaghdi77a}, we introduce a variable $\xi(s,t)$ so that the rate of entropy production for the material segment $[a,b]$ is given by 
\begin{align}
	\int_a^b \frac{\xi(s,t)}{\theta(s,t)} ds.
\end{align}
A form of the Second Law of Thermodynamics that we adopt is requiring that the rate of total entropy production is always non-negative: for all $t$ 
\begin{align}
	\int_0^L \frac{\xi(s,t)}{\theta(s,t)} ds \geq 0. \label{eq:dissineq}
\end{align}

Let $(\rho A)(s)$ be the mass per unit reference length of the rod (given a priori). We assume that $\{ \bs e_1, \bs e_2 \}$ are aligned along the undeformed rod's material sections' principal axes of inertia. Then the moment of inertia tensor $(\rho \bs J)(s,t)$ at $(s,t)$ is given by
\begin{align}
	(\rho \bs J)(s,t) &= I_{22}(s) \bs d_1(s,t) \otimes \bs d_1(s,t) + I_{11}(s) \bs d_2(s,t) \otimes \bs d_2(s,t) \\&\quad + (I_{11}(s) + I_{22}(s)) \bs d_3(s,t) \otimes \bs d_3(s,t), 
\end{align}
where $\{I_{\mu \mu}(s)\}$ are the second mass moments of inertia of the material section at $s$ (given a priori). 

We omit the dependence on $(s,t)$ in what follows. If $\bs f$ is an external body force per unit reference length, $\bs l$  is an external body couple per unit reference length, and $r$ is an external heat source per unit reference length, then the classical equations expressing balance of linear momentum, angular momentum, energy and total entropy are given by:
\begin{align}
	(\rho A) \p_t^2 \bs r &= \p_s \bs n + \bs f, \label{eq:linmom} \\
	\p_t \bigl [(\rho \bs J) \bs w \bigr ] &= \p_s \bs m + \p_s \bs r \times \bs n + 
	\bs l, \label{eq:angmom} \\
	\p_t e &= \bs n \cdot \p^2_{st} \bs r + \bs m \cdot \bs \p_s \bs w - 
	(\p_s \bs r \times \bs n) \cdot \bs w \\&\quad + r - \p_s q, \label{eq:eneq}, \\
	 \p_t \eta &= \frac{r}{\theta} - \p_s \frac{q}{\theta} + \frac{\xi}{\theta}. \label{eq:entropyeq} 
\end{align}
For more on their derivations, we refer the reader to Chapter 8 of \cite{AntmanBook} for \eqref{eq:linmom}-\eqref{eq:angmom} and to \cite{GreenLaws66, DeSilvaWhitman71, SmritiKumGroStei19} for \eqref{eq:eneq} and \eqref{eq:entropyeq}. 

In the remainder of this work we will only consider the isothermal setting: for all $(s,t)$
\begin{align}
	q(s,t) = 0, \quad \theta(s,t) = \theta_0 > 0.
\end{align}
Let 
\begin{align}
	\psi = e - \theta \eta,
\end{align}
the \emph{Helmholtz free energy} of the rod. The relations \eqref{eq:eneq} and \eqref{eq:entropyeq} imply that the point-wise, re-scaled rate of entropy production $\xi$ satisfies 
\begin{align}
	\xi = \bs m \cdot \p_s \bs w + \bs n \cdot \p_{st}^2 \bs r - (\p_s \bs r \times \bs n) \cdot \bs w - \p_t \psi. \label{eq:disspw}
\end{align}
Then the re-scaled rate of total entropy production satisfies
\begin{align}
\int_0^L \xi \, ds = \int_0^L \bigl [\bs m \cdot \p_s \bs w + \bs n \cdot \p_{st}^2 \bs r - (\p_s \bs r \times \bs n) \cdot \bs w - \p_t \psi \bigr] ds. \label{eq:diss1}
\end{align}
In what follows, we will refer to $\int_0^L \xi \, ds$ as the \emph{total dissipation rate}.  

\section{Evolving natural configurations}

\subsection{Kinematics of the natural configuration}

At each time $t$, we denote a second configuration by 
\begin{align}
	[0,L] \ni s \mapsto (\bs v_d(s,t), \bs R_d(s,t)) \in \bbR^3 \backslash \{\bs 0 \} \times SO(3),
\end{align}
referred to as a \emph{natural configuration}. The 
axial vector for $[\p_s\bs R_d(s,t)] \bs R_d^*(s,t)$ is denoted by $\bs u_d(s,t)$, and the associated natural directors are denoted by 
\begin{align}
\bs d_{d,k}(s,t) = \bs R_d(s,t) \bs e_k. \label{eq:dp}
\end{align} 
Then we have expansions
\begin{align}
	\bs u_d(s,t) &= u_{d,k}(s,t) \bs d_{d,k}(s,t), \label{eq:up} \\
	\bs v_d(s,t) &= v_{d,k}(s,t) \bs d_{d,k}(s,t). \label{eq:vp}
\end{align}

A natural configuration at time $t$ is viewed as a configuration the rod could take if all external stimuli are removed by time $t$. How the load is removed, e.g., instantaneously, very slowly, or otherwise, will lead to different natural configurations. In this work we are interested in a viscoelastic rod with instantaneous elastic response relative to the natural configuration. We will discuss the precise mathematical interpretation of these physical assumptions in Section \ref{s:natural}, but, for now, we will focus only on kinematic properties of a natural configuration. The role of natural configurations in general is discussed in detail in \cite{RajagopalReport695}.

Let $(\bs v(\cdot,t) = \p_s \bs r(\cdot,t), \bs R(\cdot,t))$ specify the current configuration of the rod at time $t$. For each $(s,t) \in [0,L] \times [0,\infty)$, we define a pair
\begin{align}
(\bs v_e(s,t), \bs R_e(s,t)) \in \bbR \times \bbR^3 \times SO(3), 
\end{align}
by 
\begin{align}
\bs R_e(s,t) &= \bs R(s,t) \bs R_d^*(s,t), \label{eq:Redef} \\
\bs v_e(s,t) &= \bs v(s,t) - \bs R_e(s,t) \bs v_d(s,t). \label{eq:vedef} 
\end{align} 
Then the variables specifying the current configuration admit the decomposition 
\begin{align}
	\bs v(s,t) &= \bs R_e(s,t) \bs v_d(s,t) + \bs v_e(s,t), \label{eq:vdecomp} \\
	\bs R(s,t) &= \bs R_e(s,t) \bs R_d(s,t). \label{eq:Rdecomp}
\end{align}
Let $\bs u_e(s,t)$
be the axial vector of $[\p_s \bs R_e(s,t)] \bs R_e(s,t)^*$. Since  
\begin{align}
	[\p_s \bs R(s,t)] \bs R^*(s,t) = \bs R_e(s,t)[\p_s \bs R_d(s,t) \bs R^*(s,t)] \bs R_e(s,t)^* + \bs R_e(s,t) \bs R_e(s,t)^*,
\end{align}
we conclude that 
\begin{align}
	\bs u(s,t) = \bs R_e(s,t) \bs u_d(s,t) + \bs u_e(s,t). \label{eq:udecomp}
\end{align}

\begin{figure}[t]
	\centering
	\includegraphics[width=\linewidth]{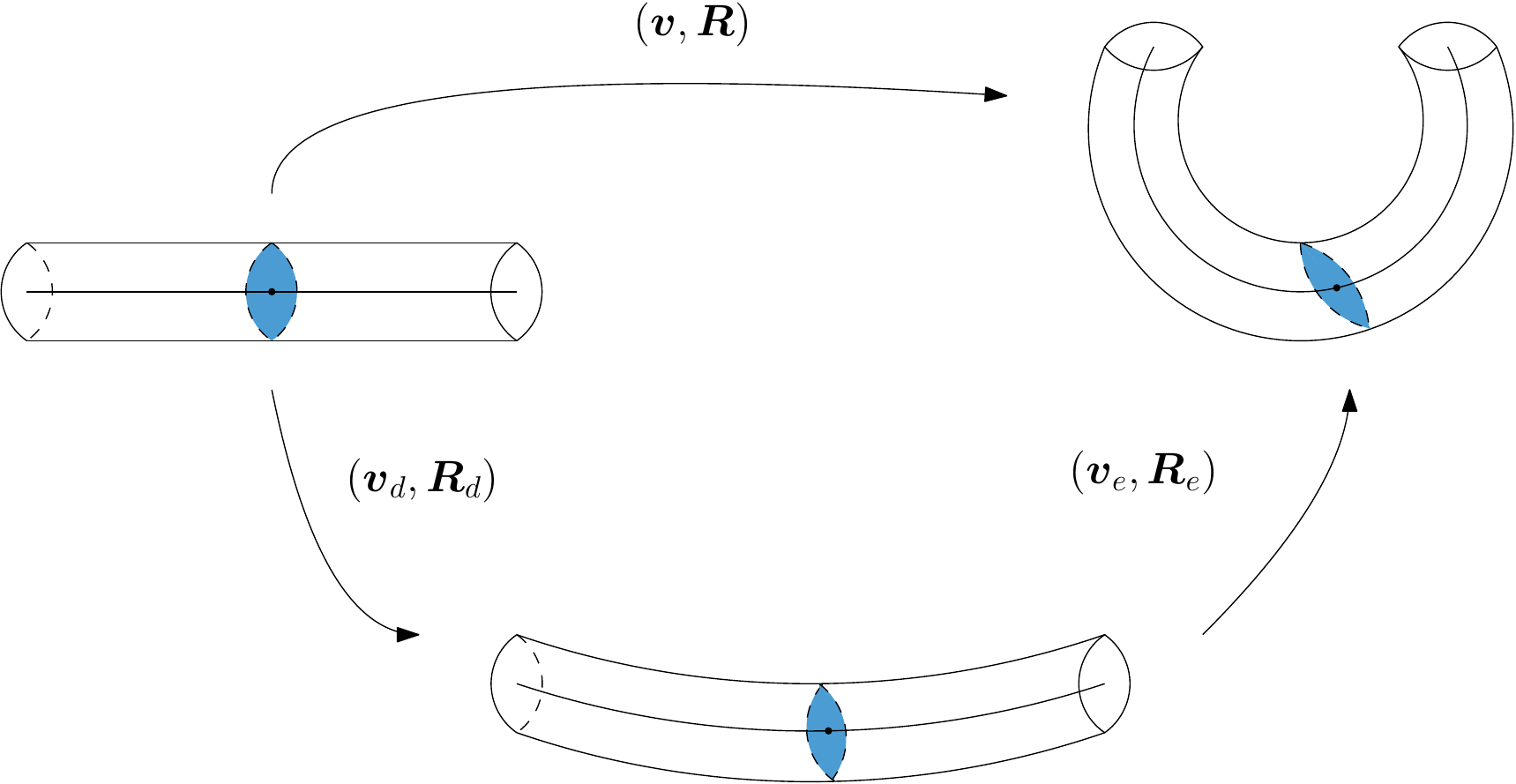}
	\caption{A schematic of the two configurations of the rod. The current configuration of the rod may be thought of as being composed of a ``dissipative" process $(\bs v_d, \bs R_d)$ and an ``elastic" process $(\bs v_e, \bs R_e)$ (see \eqref{eq:ustraindecomp}, \eqref{eq:vstraindecomp}).}
	\label{fig:2}
\end{figure}

In summary, the vector field 
\begin{align}
\bs u_e(\cdot,t) = u_{e,k}(\cdot,t) \bs d_k(\cdot,t) \label{eq:ue}
\end{align} is the axial vector field generating the rotation field $\bs R_e(\cdot,t)$ which, for each $s \in [0,L]$, rotates the natural director frame $\{ \bs d_{d,k}(s,t) \}$ to the current director frame $\{ \bs d_{k}(s,t) \}.$ The vector field 
\begin{align}
\bs v_e(\cdot,t) = v_{e,k}(\cdot,t) \bs d_k(\cdot,t) \label{eq:ve}
\end{align}
 is the difference between the rotated natural tangent vector field $\bs R_e(\cdot,t) \bs v_d(\cdot,t)$ and the current tangent vector field $\bs v(\cdot,t)$. Therefore, we may interpret the vector fields $\bs u_e(\cdot,t)$ and $\bs v_e(\cdot,t)$ as locally characterizing the deformation from the natural configuration to the current configuration. We conclude this subsection by noting that 
 \eqref{eq:dp}, \eqref{eq:up}, \eqref{eq:vp} and \eqref{eq:Rdecomp}  imply that
 \begin{align}
 	\bs R_e \bs u_{d} = u_{d,k} \bs d_k, \quad \bs R_e \bs v_d = v_{d,k} \bs d_k,
 \end{align}
and thus, by \eqref{eq:udecomp} and \eqref{eq:vdecomp}, for each $k = 1,2,3$,
\begin{align}
	u_k = u_{d,k} + u_{e,k}, \label{eq:ustraindecomp}\\
	v_k = v_{d,k} + v_{e,k}. \label{eq:vstraindecomp}
\end{align}

\subsection{Frame indifference}\label{s:changeinobserver}

In what follows we omit the dependence on $(s,t)$. 
\begin{defn}
 We say that the pair of  configurations $((\bs v_d, \bs R_d), (\bs v, \bs R))$, with associated director fields $\{ \bs d_{d,k} \}$ and $\{\bs d_{k}\}$, are \emph{related by a change of frame} to the 
pair of  configurations $((\tbsv_d, \tbsR_d),(\tbsv, \tbsR))$, with associated director fields $\{ \tbsd_{d,k} \}$ and $\{\tbsd_{k}\}$,
if there exists $\bs Q(t): (-\infty,\infty) \rar SO(3)$ such that 
\begin{align}
		\tbsv_d &= \bs Q \bs v_d, \label{eq:vptilde} \\
	\tbsd_{d,k} &= \bs Q \bs d_{d,k}, \quad k = 1,2,3, \label{eq:dptilde} \\
	\tbsv &= \bs Q \bs v, \label{eq:vtilde} \\
	\tbsd_k &= \bs Q \bs d_k, \quad k = 1,2,3. \label{eq:dtilde}
\end{align}
\end{defn}

Suppose that $((\bs v_d, \bs R_d), (\bs v, \bs R))$ and $((\tbsv_d, \tbsR_d), (\tbsv, \tbsR))$ are related by a change of frame. By \eqref{eq:rotdir} we see that \eqref{eq:dptilde} and \eqref{eq:dtilde} imply that 
\begin{align}
	\tbsR_d = \bs Q \bs R_d, \quad \tbsR = \bs Q \bs R, \\
		\tbsR_e = \tbsR \tbsR_d^* = \bs Q \bs R_e \bs Q^*. \label{eq:tRe}
\end{align}
We conclude that 
\begin{gather}
\begin{split}
		\tbsR_e \tbsu_d = \bs Q \bs u_d, \quad \tbsR_e \tbsv_d = \bs Q \bs v_d, \\
	\tbsu = \bs Q \bs u, \quad \tbsv = \bs Q \bs v,
	\end{split}\label{eq:tildetrans}
\end{gather}
and, in particular, 
\begin{gather}
	\begin{split}
	\tbsR^* (\tbsR_e \tbsu_d) = \bs R^*(\bs R_e \bs u_d) = u_{d,k} \bs e_k, \\ 
	\tbsR^* (\tbsR_e \tbsv_d) = \bs R^*(\bs R_e \bs v_d) = v_{d,k} \bs e_k, \\
	\tbsR^* \tbsu = \bs R^* \bs u = u_k \bs e_k, \\
	\tbsR^* \tbsv = \bs R^* \bs v = v_k \bs e_k.
\end{split}\label{eq:tildetransformationsR} 
\end{gather}

For what follows we denote 
\begin{gather}
	\bs X = SO(3) \times \bbR^3 \times \bbR^3  \backslash \{\bs 0\}  \times \bbR^3 \times \bbR^3  \backslash \{\bs 0\}, \\
	\msX = \bbR^3 \times \bbR^3  \backslash \{\bs 0\}  \times \bbR^3 \times \bbR^3  \backslash \{\bs 0\}, \\
	\bs Y = \bs X \times \bbR^3 \times \bbR^3 \times \bbR^3 \times \bbR^3, \\
	\msY = \msX \times \bbR^3 \times \bbR^3 \times \bbR^3 \times \bbR^3. 
\end{gather}
\begin{defn}
	We say that scalar-valued functions  
	\begin{gather}
		 \psi : \bs X \rar \bbR, \quad 
		\cl F : C([0,L]; \bs Y)\rar \bbR,  
	\end{gather}
	are \emph{frame indifferent} if for all pairs of  configurations $$((\bs v_{d}, \bs R_d), (\bs v, \bs R)) \mbox{ and } ((\tbsv_{d}, \tbsR_d), (\tbsv, \tbsR))$$ related by a change of frame, we have  
	\begin{gather}
		\psi(\tbsR, \tbsR_e \tbsu_d, \tbsR_e \tbsv_d, \tbsu, \tbsv) = \psi(\bs R, \bs R_e \bs u_d, \bs R_e \bs v_d, \bs u, \bs v), \\
		\cl F(\tbsR, \tbsR_e \tbsu_d, \tbsR_e \tbsv_d, \tbsu, \tbsv, \p_t [\tbsR_e \tbsu_d], \p_t [\tbsR_e \tbsv_d], \p_t \tbsu, \p_t \tbsv) \\
		= \cl F(\bs R, \bs R_e \bs u_d, \bs R_e \bs v_d, \bs u, \bs v, \p_t [\bs R_e\bs u_d], \p_t [\bs R_e \bs v_d], \p_t \bs u, \p_t \bs v). 
	\end{gather}
Here the argument of the function $\psi$ is the five-tuple of point-wise values 
\begin{align}
(\bs R(s,t), \bs R_e(s,t) \bs u_d(s,t), \bs R_e(s,t) \bs v_d(s,t), \bs u(s,t), \bs v(s,t)),
\end{align}
and the argument of the functional $\cl F$ is the nine-tuple of {functions}
\begin{gather}
	(\bs R(\cdot,t), \bs R_e(\cdot,t) \bs u_d(\cdot,t), \bs R_e(\cdot,t) \bs v_d(\cdot,t), \bs u(\cdot,t), \bs v(\cdot,t), \\ \p_t [\bs R_e(\cdot,t)\bs u_d(\cdot,t)], \p_t [\bs R_e(\cdot,t) \bs v_d(\cdot,t)], \p_t \bs u(\cdot,t), \p_t \bs v(\cdot,t)).
\end{gather}
\end{defn}

By fairly standard arguments, we obtain the following.

\begin{prop}\label{p:change}
Scalar-valued functions $\psi : \bs X \rar \bbR$ and $\cl F: C([0,L]; \bs Y) \rar \bbR$ are frame indifferent if and only if there exist functions $\hat \psi : \msX \rar \bbR$ and $\hat{\cl F} : C([0,L]; \msY) \rar \bbR$ such that for every  pair of configurations $((\bs v_d, \bs R), (\bs v, \bs R))$ there holds 
\begin{gather}
	\psi(\bs R, \bs R_e\bs u_d, \bs R_e \bs v_d, \bs u, \bs v) = \hat \psi( \mfu_d, \mfv_d, \mfu, \mfv ), \label{eq:psiobs} \\
	\cl F(\bs R, \bs R_e \bs u_d, \bs R_e \bs v_d, \bs u, \bs v, \p_t [\bs R_e\bs u_d], \p_t [\bs R_e \bs v_d], \p_t \bs u, \p_t \bs v) \\ = \hat{\cl F}( \mfu_d, \mfv_d, \mfu, \mfv, \p_t \mfu_d, \p_t \mfv_d, \p_t \mfu, \p_t \mfv), \label{eq:xiobs}
\end{gather} 
where 
\begin{gather}
\mfu_d = \bs R^*(\bs R_e \bs u_d) = \bs R_d^* \bs u_d = u_{d,k} \bs e_k, \quad \mfv_d = \bs R^*(\bs R_e \bs v_d) = \bs R_d^* \bs v_d = v_{d,k} \bs e_k, \\
\mfu = \bs R^* \bs u = u_k \bs e_k, \quad \mfv = \bs R^* \bs v = v_k \bs e_k.
\end{gather}
\end{prop}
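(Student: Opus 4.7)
The plan is to establish the two directions of the biconditional separately.

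\emph{If direction.} Assuming $\psi$ and $\cl F$ admit the representations \eqref{eq:psiobs} and \eqref{eq:xiobs}, I would verify that each argument of $\hat\psi$ and $\hat{\cl F}$ is invariant under a change of frame. The four base invariants $\mfu_d, \mfv_d, \mfu, \mfv$ are already shown to be unchanged in \eqref{eq:tildetransformationsR}. For the four time-derivative invariants, a short calculation exploiting $\p_t(\bs Q^* \bs Q) = \bs 0$ gives, for instance,
\[
\p_t(\tbsR^* \tbsu) = \p_t(\bs R^* \bs Q^* \bs Q \bs u) = \p_t(\bs R^* \bs u) = \p_t \mfu,
\]
and similarly for $\p_t \mfv, \p_t \mfu_d, \p_t \mfv_d$. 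Frame indifference of $\psi$ and $\cl F$ then follows directly.

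\emph{Only if direction for $\psi$.} Given any argument $(\bs R, \bs R_e \bs u_d, \bs R_e \bs v_d, \bs u, \bs v) \in \bs X$, I would apply a change of frame with $\bs Q(t) = \bs R^*$ at the instant under consideration. By \eqref{eq:tildetransformationsR}, the transformed arguments become $(\bs I, \mfu_d, \mfv_d, \mfu, \mfv) \in \bs X$, and frame indifference gives
\[
\psi(\bs R, \bs R_e \bs u_d, \bs R_e \bs v_d, \bs u, \bs v) = \psi(\bs I, \mfu_d, \mfv_d, \mfu, \mfv).
\]
Setting $\hat\psi(\mfu_d, \mfv_d, \mfu, \mfv) := \psi(\bs I, \mfu_d, \mfv_d, \mfu, \mfv)$ then establishes \eqref{eq:psiobs}.

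\emph{Only if direction for $\cl F$.} The analogous argument is subtler because $\bs Q$ depends only on $t$, so a single change of frame cannot trivialize $\bs R(s, t)$ at every $s$ simultaneously. For the time $t_0$ at which $\cl F$ is evaluated, I would instead choose $\bs Q(\tau) := \bs R^*(0, \tau)$ on a neighborhood of $t_0$. Then $\tbsR(0, \tau) = \bs I$ on that neighborhood, so $\p_t \tbsR(0, t_0) = \bs 0$ as well. The full profile $\tbsR(\cdot, t_0)$ is then recovered from the invariant $\mfu(\cdot, t_0)$ by integrating the Darboux relation $\p_s \tbsR \tbsR^* \bs a = (\tbsR \mfu) \times \bs a$ from the initial condition $\tbsR(0, t_0) = \bs I$; differentiating that relation in $t$ and using $\p_t \tbsR(0, t_0) = \bs 0$ likewise recovers $\p_t \tbsR(\cdot, t_0)$ from $\mfu$ and $\p_t \mfu$. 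All remaining transformed fields and their $t$-derivatives at $t_0$ are then algebraic expressions in $\tbsR, \p_t \tbsR$ and the eight invariants $\mfu_d, \mfv_d, \mfu, \mfv, \p_t \mfu_d, \p_t \mfv_d, \p_t \mfu, \p_t \mfv$. Defining $\hat{\cl F}$ as the value of $\cl F$ on this canonical representative and invoking frame indifference yields \eqref{eq:xiobs}.

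The main obstacle is the bookkeeping in this last step: one must explicitly verify that the transformed time-derivative arguments, such as $\p_t \tbsu = \p_t \tbsR \cdot \mfu + \tbsR \cdot \p_t \mfu$ and the analogues for $\tbsv$, $\tbsR_e \tbsu_d$, and $\tbsR_e \tbsv_d$, are genuinely determined by the eight listed invariants alone. This is where the assumption that frame indifference holds for \emph{all} smooth $\bs Q : \bbR \to SO(3)$, rather than only for constant rotations, is essential, and tracking the $\dot{\bs Q}(t)$ contributions requires some care.
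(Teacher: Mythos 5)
Your proof is correct, and for the function $\psi$ it coincides with the paper's argument: apply the change of frame $\bs Q = \bs R^*$ (constant in $t$, chosen for the point under consideration) to reduce to the canonical representative $\psi(\bs I, \mfu_d, \mfv_d, \mfu, \mfv)$, and use \eqref{eq:tildetransformationsR} for the converse. Where you genuinely diverge is the functional $\cl F$: the paper simply declares that case ``completely analogous'' and omits it, whereas you correctly observe that it is \emph{not} literally analogous, since $\bs Q$ may depend only on $t$ and therefore cannot normalize $\bs R(s,t)$ to $\bs I$ at every $s$ simultaneously. Your fix --- take $\bs Q(\tau) = \bs R^*(0,\tau)$ so that $\tbsR(0,\cdot) = \bs I$ and $\p_t \tbsR(0,t_0) = \bs 0$, then reconstruct $\tbsR(\cdot,t_0)$ and $\p_t\tbsR(\cdot,t_0)$ from $\mfu$ and $\p_t\mfu$ by integrating $\p_s\tbsR = \tbsR\,\hat{\mfu}$ in $s$ and differentiating in $t$ --- is exactly the missing content: it shows every entry of the transformed nine-tuple, including the $\dot{\bs Q}$-contaminated quantities such as $\p_t\tbsu = \p_t\tbsR\,\mfu + \tbsR\,\p_t\mfu$, is an explicit functional of the eight invariants, so $\hat{\cl F}$ is well defined as the value of $\cl F$ on this canonical representative. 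The payoff of your version is a complete proof of the $\cl F$ case rather than an assertion; the only cost is the ODE bookkeeping you flag, which does go through since the relevant Darboux systems are linear in $\tbsR$ and $\p_t\tbsR$ with data determined by the invariants.
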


\begin{proof}
	Suppose first that $\psi$ is frame indifferent. Let $((\bs v_{d}, \bs R_d), (\bs v, \bs R))$ be an arbitrary pair of configurations, and consider
	\begin{gather}
		\tbsv_{d} = \bs R^* \bs v_{d}, \quad \tbsR_d = \bs R^* \bs R_d, \\
		\tbsv = \bs R^* \bs v, \quad \tbsR = \bs R^* \bs R = \bs I. 
	\end{gather}
	Then by \eqref{eq:tildetrans} we have 
	\begin{align}
		\psi (\bs R, \bs R_e \bs u_d, \bs R_e \bs v_d, \bs u, \bs  v) &= \psi(\bs I, \bs R^*(\bs R_e \bs u_d), \bs R^*(\bs R_e \bs v_d), \bs R^* \bs u, \bs R^* \bs v) \\
		&= \psi(\bs I, \mfu_d, \mfv_d, \mfu, \mfv),
	\end{align}
	 establishing \eqref{eq:psiobs}. Conversely, if $\psi$ satisfies \eqref{eq:psiobs}, then \eqref{eq:tildetransformationsR} implies that $\psi$ is frame indifferent. The proof for $\cl F$ is completely analogous and is omitted. 
\end{proof}

\subsection{In what sense the natural configuration is natural}\label{s:natural}

The precise sense that the variables $\mfu_d$ and $\mfv_d$ model a natural configuration of the rod is expressed via the following assumption on the free energy. 
\begin{assum}\label{ass:natural}
	The Helmholtz free energy $\psi$ satisfies for all $\mfu_d, \mfv_d, \mfu, \mfv$, 
	\begin{align}
		\partial_{\mfu} \psi \big |_{\mfu = \mfu_d, \mfv = \mfv_d} = \bs 0, \quad 
		\partial_{\mfv} \psi \big |_{\mfu = \mfu_d, \mfv = \mfv_d} = \bs 0. \label{eq:natural}
	\end{align}
\end{assum}

For the constitutive equations that we derive in Section 4.2 (see \eqref{eq:mconsteq}, \eqref{eq:nconsteq}), we see that \eqref{eq:natural} is equivalent to the condition that the contact couple and force vanishes when the rod's current configuration is at rest in the natural configuration, i.e., 
\begin{align}
	(\p_t \msu, \p_t \msv, \msu, \msv) = (\bs 0, \bs 0, \msu_d, \msv_d) \implies \msm = \bs 0, \msn = \bs 0.
\end{align}
A simple class of free energies satisfying \eqref{eq:natural} is given by
\begin{align}
	\psi = \frac{1}{2} (\mfu - \mfu_d)\cdot \mfA (\mfu - \mfu_d) + 
	\frac{1}{2} (\mfv - \mfv_d) \cdot \mfB (\mfv - \mfv_d) + \hat \psi_d(\mfu_d, \mfv_d), \label{eq:examplepsi}
\end{align}
where $\mfA$ and $\mfB$ take values in the set of positive definite symmetric tensors. 

\subsection{Inextensible and unshearable rods} \label{s:inextunsh}
\begin{defn}
We say that the rod is \emph{inextensible} if every possible natural configuration and current configuration of the rod satisfy  
\begin{align}
	|\bs v_{d}| = 1, \quad |\bs v| = 1. 
\end{align}
The rod is \emph{unshearable} if every possible natural configuration and current configuration of the rod satisfy
\begin{align}
	v_{d,1} = v_{d,2} = 0, \quad v_{1} = v_2 = 0,
\end{align}
that is, the vector $\bs v_d$ is always parallel to the natural director $\bs d_{d,3} = \bs R_d \bs e_3$ and $\bs v$ is always parallel to the current director $\bs d_{3} = \bs R \bs e_3$.  
\end{defn}

Written differently, the rod is unshearable if every  natural configuration and current configuration of the rod satisfy
\begin{align}
	\bs v_d = |\bs v_d| \bs R_d \bs e_3, \quad
	\bs v = |\bs v| \bs R \bs e_3.
\end{align}

For an unshearable rod, we have
\begin{align}
\bs v &= |\bs v|\bs R \bs e_3
= |\bs v| \bs R_e \bs R_d \bs e_3
= \frac{|\bs v|}{|\bs v_d|}\bs R_e \bs v_d, \label{eq:unshcalc}
\end{align} 
implying that  
\begin{align}
	\bs v_e = \Bigl ( \frac{|\bs v|}{|\bs v_d|} - 1 \Bigr ) \bs R_e \bs v_d. 
\end{align}
In particular, an inextensible and unshearable rod equivalently satisfies 
\begin{align}
	\bs v = \bs R_e \bs v_d \iff \bs v_e = \bs 0.
\end{align}
Thus, an inextensible and unshearable rod with evolving natural configuration is characterized by the constraints on the strains,
\begin{gather}
 v_1 = v_2 = 0, v_3 = 1, \\
 v_{d,1} = v_{d,2} = 0, v_{d,3} = 1, \\
 v_{e,1} = v_{e,2} = v_{e,3} = 0.
\end{gather}

\section{Constitutive theory}

In this section we omit the dependence of various variables on $s$ and $t$. Given an associated natural configuration $(\bs v_d, \bs R_d)$,  the current configuration $(\bs v, \bs R)$, contact couple $\bs m$ and contact force $\bs n$ we define vector fields $\mfm, \mfn, \mfu_d, \mfv_d, \mfu_e, \mfv_e, \mfu, \mfv$ via 
\begin{gather}
\mfm = \bs R^* \bs m = m_k \bs e_k, \quad \mfn = \bs R^* \bs n = n_k \bs e_k, \\
\mfu_d = \bs R^*(\bs R_e \bs u_d) = \bs R_d^* \bs u_d = u_{d,k} \bs e_k, \quad \mfv_d = \bs R^*(\bs R_e \bs v_d) = \bs R_d^* \bs v_d = v_{d,k} \bs e_k, \\
\mfu_e = \bs R^* \bs u_e = u_{e,k} \bs e_k, \quad \mfv_e = \bs R^* \bs v_e = v_{e,k} \bs e_k, \\
\mfu = \bs R^* \bs u = u_k \bs e_k, \quad \mfv = \bs R^* \bs v = v_k \bs e_k, 
\end{gather}
and we note that \eqref{eq:ustraindecomp} and \eqref{eq:vstraindecomp} can be equivalently written as 
\begin{align}
	\mfu = \mfu_d + \mfu_e, \quad \mfv = \mfv_d + \mfv_e. 
\end{align}

In this section we close the field equations for a special Cosserat rod with evolving natural configuration by specifying \emph{constitutive relations} between the variables $\mfm, \mfn, \mfu_d, \mfv_d, \mfu,$ and $\mfv,$ (six 3-dimensional vector fields) and specifying evolution equations for $\mfu_d$ and $\mfv_d$ (two $3$-dimensional vector fields). 

Following \cite{RAJAGOPALSRI2000}, we will accomplish the above tasks in two steps by:
\begin{itemize}
	\item positing forms of the Helmholtz free energy $\psi$ and the total dissipation rate $\int_0^L \xi \, ds$ that are frame indifferent and automatically satisfy \eqref{eq:dissineq} (two \emph{scalar} functions), 
	\item appealing to a variational principle that the strain rates of the rod during its motion maximize the total dissipation rate $\int_0^L \xi \, ds$ (see Proposition \ref{p:maxdiss} and Proposition \ref{p:maxdissuniform} for the precise formulations applied).  
\end{itemize} 

\subsection{Helmholtz free energy and total dissipation rate}

We assume that the Helmholtz free energy $\psi$ is twice continuously differentiable and frame indifferent,
\begin{align}
	\psi = \hat \psi(\mfu_d, \mfv_d, \mfu, \mfv). 
\end{align}  
For simplicity we posit the following explicit, non-negative, rate-dependent, frame-indifferent form of the total dissipation rate
\begin{align}
	\int_0^L \xi \, ds &= \int_0^L \Bigl (\p_t \mfu \cdot \mfM \p_t \mfu 
	+ \p_t \mfv \cdot \mfN \p_t \mfv 
	 \\&\quad + \p_t \mfu_d \cdot \mfM_d \p_t \mfu_d + 
	\p_t \mfv_d \cdot \mfN_d \p_t \mfv_d \Bigr ) ds, \label{eq:xichoice}
\end{align}
where 
\begin{gather}
	\mfM_d = \hat M_{d,k\ell}(\mfu_d, \mfv_d) \bs e_k \tens \bs e_\ell, \quad \mfN_d = \hat N_{d,k\ell}(\mfu_d, \mfv_d) \bs e_k \tens \bs e_\ell, \\
		\mfM = \hat M_{k\ell}(\mfu_d, \mfv_d, \mfu, \mfv) \bs e_k \tens \bs e_\ell, \quad 
			\mfN = \hat N_{k\ell}(\mfu_d, \mfv_d, \mfu, \mfv) \bs e_k \tens \bs e_\ell,
\end{gather}
take values in the set of positive, definite, symmetric tensors. Then \eqref{eq:dissineq} is automatically satisfied. The first two terms appearing on the right hand side of \eqref{eq:xichoice} are standard dissipative terms, while the third and fourth terms appearing on the right hand side of \eqref{eq:xichoice} admit the physical interpretation that \emph{a change in natural configuration always dissipates a nontrivial amount of energy}.
  
\subsection{Maximizing the total dissipation rate}

We now turn to positing relations for the components of the contact couple and contact force in terms of those of the kinematic variables,
\begin{align}
	\begin{split}
	\mfm = \hat \mfm (\mfu_d, \mfv_d, \mfu, \mfv, \p_t \mfu_d, \p_t \mfv_d, \p_t \mfu, \p_t \mfv), \\
	\mfn = \hat \mfn (\mfu_d, \mfv_d, \mfu, \mfv, \p_t \mfu_d, \p_t \mfv_d, \p_t \mfu, \p_t \mfv), 
	\end{split}\label{eq:constequs}
\end{align}  
and evolution equations determining the natural configuration  
\begin{align}
	\p_t \mfu_d = \mfF_d(\mfu_d, \mfv_d, \mfu, \mfv), \quad 
		\p_t \mfv_d = \mfG_d(\mfu_d, \mfv_d, \mfu, \mfv). \label{eq:evoleqs}
\end{align}
We remark that prescribing relations for the components of the contact couple and contact force in terms of the kinematic variables is the ``usual" approach, it goes against causality as the forces and couples are the causes. An alternative approach that is more consistent with causality is prescribing the kinematic variables and their rates in terms of the contact couple and contact force via specifying the Gibb's free energy rather than the Helmholtz free energy (see \cite{RajSriPRSL11}). 

By \eqref{eq:xichoice}, \eqref{eq:diss1} and the relations
\begin{align}
	\p_s \bs w &= (\p_t u_k) \bs d_k,\\
	\p_{st}^2 \bs r &= (\p_t v_k) \bs d_k + \bs w \times \bs v, \\
	(\p_s \bs r \times \bs n) \cdot \bs w &= \bs n \cdot (\bs w \times \bs v),
\end{align}
we conclude that during the motion of the rod, the total dissipation rate satisfies 
\begin{gather}
	\int_0^L \Bigl (\p_t \mfu \cdot \mfM \p_t \mfu + \p_t \mfv \cdot \mfN \p_t \mfv + 
	\p_t \mfu_d \cdot \mfM_d \p_t \mfu_d + \p_t \mfv_d \cdot \mfN_d \p_t \mfv_d \Bigr ) ds = \\
	 \int_0^L \Bigl [(\mfm - \p_{\mfu} \psi) \cdot \p_t \mfu +  
	(\mfn - \p_{\mfv} \psi) \cdot \p_t \mfv
	- \p_{\mfu_d} \psi \cdot \p_t \mfu_d
		- \p_{\mfv_d} \psi \cdot \p_t \mfv_d \Bigr ] ds. 
	 \label{eq:const1} 
\end{gather}   
However, the single scalar constraint \eqref{eq:const1} and knowledge of the Helmholtz free energy $\psi$ do not uniquely determine the sought after relations \eqref{eq:constequs} and evolution equations \eqref{eq:evoleqs},
four point-wise vector equations. Before stating our choice of \eqref{eq:constequs} and \eqref{eq:evoleqs}, we prove the following variational result. 

\begin{prop}\label{p:maxdiss}
	Let $t \in (-\infty, \infty)$ be fixed, and let
	\begin{align}
	\mfm(\cdot,t), \mfn(\cdot,t), \mfu_d(\cdot,t), \mfv_d(\cdot,t), \mfu(\cdot,t), \mfv(\cdot,t) \in L^\infty([0,L]; \bbR^3).
	\end{align}
	 The element 
	 \begin{align}
	 (\dot \mfu_d(\cdot,t), \dot \mfv_d(\cdot,t), \dot \mfu(\cdot,t), \dot \mfv(\cdot,t)) \in L^2([0,L]; (\bbR^3)^4)
	 \end{align}
  determined by the relations
 \begin{align}
 	\mfm &= \p_{\mfu} \hat \psi(\mfu_d, \mfv_d, \mfu, \mfv) + \mfM \dot \mfu, \label{eq:mfmeq} \\
 	\mfn &= \p_{\mfv} \hat \psi(\mfu_d, \mfv_d, \mfu, \mfv) + \mfN \dot \mfv, \label{eq:mfneq} \\
 	\mfM_d \dot \mfu_d &= -\p_{\mfu_d} \hat \psi(\mfu_d, \mfv_d, \mfu, \mfv), \label{eq:dotmfupeq} \\
 	\mfN_d \dot \mfv_d &= -\p_{\mfv_d} \hat \psi(\mfu_d, \mfv_d, \mfu, \mfv), \label{eq:dotmfvpeq}
 \end{align}
	 is the unique maximizer in $L^2([0,L]; (\bbR^3)^4)$ of the functional 
	\begin{gather}
		\cl F(\dot \msu_d, \dot \msv_d, \dot \msu, \dot \msv) = \int_0^L \Bigl ( \dot \mfu \cdot \mfM \dot \mfu 
	+ \dot \mfv \cdot \mfN \dot \mfv 
	+ \dot \mfu_d \cdot \mfM_d \dot \mfu_d + 
	\dot \mfv_d \cdot \mfN_d \dot \mfv_d \Bigr ) ds
	\end{gather}
	 subject to the constraint
	 \begin{gather}
	 		\int_0^L \Bigl (
	 		\dot \mfu \cdot \mfM \dot \mfu + \dot \mfv \cdot \mfN \dot \mfv + 
	 		\dot \mfu_d \cdot \mfM_d \dot \mfu_d + \dot \mfv_d \cdot \mfN_d \dot \mfv_d \Bigr ) ds
	 		= \\ \int_0^L \Bigl [ (\mfm - \p_{\mfu} \psi) \cdot \dot \mfu +  
	 		(\mfn - \p_{\mfv} \psi) \cdot \dot \mfv
	 		 - \p_{\mfu_d} \psi \cdot \dot \mfu_d
	 		- \p_{\mfv_d} \psi \cdot \dot \mfv_d \Bigr ] ds.  
	 	\label{eq:const} 
	 \end{gather}  
For the maximizer, we have
\begin{align}
\cl F(\dot \msu_d, \dot \msv_d, \dot \msu, \dot \msv)
&=\int_0^L \Bigl (
|\mfM^{-1/2} (\mfm - \p_{\mfu} \psi)|^2 + 
|\mfN^{-1/2} (\mfn - \p_{\mfv} \psi)|^2 \\
&\qquad \qquad + |\mfM_d^{-1/2} \p_{\mfu_d} \psi|^2 + |\mfN_d^{-1/2} \p_{\mfv_d} \psi|^2 \Bigr ) ds.  \label{eq:totdissipation} 
\end{align}
\end{prop}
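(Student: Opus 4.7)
The plan is to recast the problem abstractly on the Hilbert space $H := L^2([0,L]; (\bbR^3)^4)$ and solve it by a geometric argument rather than by brute-force Lagrange multipliers. Write $X = (\dot\mfu_d, \dot\mfv_d, \dot\mfu, \dot\mfv) \in H$, and let $\bs A : H \to H$ denote the block-diagonal multiplication operator with blocks $(\mfM_d, \mfN_d, \mfM, \mfN)$ evaluated along the fixed $L^\infty$ fields $(\mfu_d, \mfv_d, \mfu, \mfv)$. By the positive definiteness stated in Section 4.1 together with the $L^\infty$ bounds on the blocks, $\bs A$ is bounded, self-adjoint, and boundedly invertible on $H$. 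Let $b \in H$ correspond to the fields $(-\p_{\mfu_d} \hat\psi,\, -\p_{\mfv_d} \hat\psi,\, \mfm - \p_{\mfu} \hat\psi,\, \mfn - \p_{\mfv} \hat\psi)$, which lie in $L^\infty \subseteq H$ because $\hat\psi \in C^2$ and its arguments are $L^\infty$. In this language, $\cl F(X) = \langle X, \bs A X\rangle_H$ and the constraint \eqref{eq:const} becomes the single scalar equation $\langle X, \bs A X\rangle_H = \langle b, X\rangle_H$.

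The key step is to complete the square in the $\bs A$-inner product: the algebraic identity $\langle X, \bs A X\rangle - \langle b, X\rangle = \langle X - \tfrac{1}{2}\bs A^{-1}b,\, \bs A(X - \tfrac{1}{2}\bs A^{-1}b)\rangle - \tfrac{1}{4}\langle b, \bs A^{-1}b\rangle$ shows that the admissible set is the ellipsoid $\cl E := \{X \in H : \|X - \tfrac{1}{2}\bs A^{-1}b\|_{\bs A}^2 = \tfrac{1}{4}\langle b, \bs A^{-1}b\rangle\}$, where $\|\cdot\|_{\bs A}$ is the norm induced by $\bs A$. On $\cl E$, the quadratic objective $\cl F$ \emph{coincides} with the linear functional $X \mapsto \langle b, X\rangle$, so maximizing $\cl F$ over the constraint set is equivalent to maximizing a linear functional on a bounded ellipsoid.

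To finish, I would change variables via $Y := \bs A^{1/2}(X - \tfrac{1}{2}\bs A^{-1}b)$, turning $\cl E$ into the sphere $\|Y\|_H = \tfrac{1}{2}\|\bs A^{-1/2}b\|_H$ and the objective into $\langle b, X\rangle = \tfrac{1}{2}\|\bs A^{-1/2}b\|_H^2 + \langle \bs A^{-1/2}b, Y\rangle_H$. Cauchy--Schwarz yields a unique maximizer $Y = \tfrac{1}{2}\bs A^{-1/2}b$ (degenerating to $Y = 0$ when $b = 0$), hence $X = \bs A^{-1}b$. Unpacked block-by-block, the vector equation $\bs A X = b$ is precisely the pointwise system \eqref{eq:mfmeq}--\eqref{eq:dotmfvpeq}, and the maximum value is $\cl F(X) = \langle \bs A^{-1}b, b\rangle = \|\bs A^{-1/2}b\|_H^2$, which expands into \eqref{eq:totdissipation}.

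The main obstacle is, in fact, minor: it consists only in verifying that $\bs A$ is boundedly invertible on $H$ so that completing the square is legitimate. This is supplied by the uniform positive definiteness and $L^\infty$ bounds on $\mfM_d, \mfN_d, \mfM, \mfN$ from Section 4.1, which give uniform ellipticity and boundedness of $\langle \cdot, \bs A \cdot\rangle_H$. Once this is in place, the rest is just completing the square plus Cauchy--Schwarz, and both existence and uniqueness of the maximizer drop out simultaneously.
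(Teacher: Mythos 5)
Your argument is correct and arrives at the same maximizer and the same maximal value \eqref{eq:totdissipation}. Both your proof and the paper's run on the same engine: the constraint \eqref{eq:const} is used to replace the quadratic objective $\cl F$ by the linear functional $X \mapsto \langle b, X\rangle_H$ on the admissible set, and Cauchy--Schwarz in the $\bs A$-weighted $L^2$ inner product finishes the job. The difference is in how the maximizer is produced. The paper guesses the candidate \eqref{eq:mfmeq}--\eqref{eq:dotmfvpeq}, verifies that it satisfies the constraint, computes its $\cl F$-value, and then shows for an arbitrary competitor $(\mfa_d, \mfb_d, \mfa, \mfb)$ satisfying the constraint that $[\cl F(\mfa_d, \mfb_d, \mfa, \mfb)]^2 \le \cl F(\dot\mfu_d, \dot\mfv_d, \dot\mfu, \dot\mfv)\,\cl F(\mfa_d, \mfb_d, \mfa, \mfb)$ via pointwise weighted Cauchy--Schwarz followed by the integral Cauchy--Schwarz inequality, extracting uniqueness from the equality case. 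You instead complete the square to identify the admissible set as a sphere in the $\bs A$-metric centered at $\tfrac12 \bs A^{-1} b$, which \emph{derives} the maximizer $X = \bs A^{-1} b$ rather than verifying a guessed one, makes existence and uniqueness drop out simultaneously, and handles the degenerate case $b = 0$ (where the admissible set collapses to a point) explicitly. The one hypothesis you rightly flag --- that $\bs A$ is bounded and boundedly invertible on $H$, i.e., uniform-in-$s$ positive definiteness and boundedness of $\mfM, \mfN, \mfM_d, \mfN_d$ along the given $L^\infty$ fields --- is equally needed, though left implicit, in the paper's proof, since without it the candidate fields such as $\mfM^{-1}(\mfm - \p_{\mfu}\psi)$ need not lie in $L^2$ and the right-hand side of \eqref{eq:totdissipation} need not be finite.
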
 

\begin{proof}
We first note that for any $\mfa, \mfb \in \bbR^3$ and positive definite symmetric matrix $\mfA$, we have 
	\begin{align}
		\mfa \cdot \mfA \mfa &= \mfA^{1/2} \mfa \cdot \mfA^{1/2} \mfa = |\mfA^{1/2} \mfa|^2, \label{eq:A1/2} \\
		\mfa \cdot \mfA \mfb & \leq |\mfA^{-1/2} \mfa| |\mfA^{1/2} \mfb| = 
		|\mfA^{-1/2} \mfa| \bigl ( \mfb \cdot \mfA \mfb \bigr )^{1/2}. \label{eq:ACS}
	\end{align}

Now the function $(\dot \mfu_d, \dot \mfv_d, \dot \mfu, \dot \mfv)$ determined by \eqref{eq:mfmeq}, \eqref{eq:mfneq}, \eqref{eq:dotmfupeq} and \eqref{eq:dotmfvpeq} is an element of $L^2([0,L]; (\bbR^3)^4)$, satisfies the constraint \eqref{eq:const}, and 
\begin{align}
\cl F(\dot \msu_d, \dot \msv_d, \dot \msu, \dot \msv) &= \int_0^L \Bigl (
	(\mfm - \p_{\mfu} \psi) \cdot \mfM^{-1} (\mfm - \p_{\mfu} \psi) +
	(\mfn - \p_{\mfv} \psi) \cdot \mfN^{-1} (\mfn - \p_{\mfv} \psi) \\ &\qquad \qquad + \p_{\mfu_d} \psi \cdot \mfM_d^{-1} \p_{\mfu_d} \psi + \p_{\mfv_d} \psi \cdot \mfN_d^{-1} \p_{\mfv_d} \psi \Bigr ) ds \\
	&= \int_0^L \Bigl (
	|\mfM^{-1/2} (\mfm - \p_{\mfu} \psi)|^2 + 
	|\mfN^{-1/2} (\mfn - \p_{\mfv} \psi)|^2 \\
	&\qquad \qquad + |\mfM_d^{-1/2} \p_{\mfu_d} \psi|^2 + |\mfN_d^{-1/2} \p_{\mfv_d} \psi|^2 \Bigr ) ds.  \label{eq:totdiss1}
\end{align}

Let 
 $(\mfa_d, \mfb_d, \mfa, \mfb) \in L^2([0,L]; (\bbR^3)^4)$ satisfy the constraint
\begin{gather}
	\int_0^L \Bigl (\mfa \cdot \mfM \mfa + \mfb \cdot \mfN \mfb + 
	\mfa_d \cdot \mfM_d \mfa_d + \mfb_d \cdot \mfN_d \mfb_d \Bigr ) ds = \\
	\int_0^L \Bigl [(\mfm - \p_{\mfu} \psi) \cdot \mfa +  
	(\mfn - \p_{\mfv} \psi) \cdot \mfb
	\\ - \p_{\mfu_d} \psi \cdot \mfa_d
	- \p_{\mfv_d} \psi \cdot \mfb_d \Bigr ] ds. 
	\label{eq:constraint1} 
\end{gather}
By \eqref{eq:A1/2}, \eqref{eq:ACS}, the Cauchy-Schwarz inequality and \eqref{eq:totdiss1}, we have 
\begin{align}
	[\cl F(\msa_d, \msb_d, \msa, \msb) ]^2 
	&= \Bigl [ \int_0^L \Bigl (
	(\mfm - \p_{\mfu} \psi) \cdot \mfa +  
	(\mfn - \p_{\mfv} \psi) \cdot \mfb \\
	&\qquad \qquad - \p_{\mfu_d} \psi \cdot \mfa_d
	- \p_{\mfv_d} \psi \cdot \mfb_d
	\Bigr ) ds \Bigr ]^2 \\
	&\leq \int_0^L \Bigl (
	|\mfM^{-1/2} (\mfm - \p_{\mfu} \psi)|^2 + 
	|\mfN^{-1/2} (\mfn - \p_{\mfv} \psi)|^2 \\
	&\qquad \qquad + |\mfM_d^{-1/2}\p_{\mfu_d} \psi|^2 + |\mfN_d^{-1/2} \p_{\mfv_d} \psi|^2 \Bigr ) ds \\
	&\times \int_0^L \Bigl ( |\mfM^{1/2}  \mfa|^2 + |\mfN^{1/2} \mfb|^2  \\ 
	&\qquad \qquad + |\mfM_d^{1/2} \mfa_d|^2 +  |\mfN_d^{1/2} \mfb_d|^2 \Bigr ) ds \\
	&= \cl F(\dot \mfu_d, \dot \mfv_d, \dot \mfu, \dot \mfv)
	\cl F (\mfa_d, \mfb_d, \mfa, \mfb), \label{eq:max}
\end{align}
with equality if and only if there exists $\lambda \in \bbR$ such that 
\begin{align}
	(\mfa_d, \mfb_d, \mfa, \mfb) = \lambda (\dot \mfu_d, \dot \mfv_d, \dot \mfu, \dot \mfv). \label{eq:equalitylambda}
\end{align}
Thus, $(\dot \mfu_d, \dot \mfv_d, \dot \mfu, \dot \mfv)$ determined by \eqref{eq:mfmeq}, \eqref{eq:mfneq}, \eqref{eq:dotmfupeq} and \eqref{eq:dotmfvpeq} is a maximizer in $L^2([0,L]; (\bbR^3)^4)$ of the constrained functional $\cl F$, and by 
\eqref{eq:equalitylambda}, \eqref{eq:const}, and \eqref{eq:constraint1}, we conclude that it is the unique such maximizer. 
\end{proof}

By a minor variation of the previous proof, we obtain the following variant of Proposition \ref{p:maxdiss}. 

\begin{prop}\label{p:maxdissuniform}
	Let $t \in (-\infty, \infty)$ be fixed, and let
	\begin{gather}
		\mfu_d(t), \mfv_d(t) \in \bbR^3, \\
		\mfm(\cdot,t), \mfn(\cdot,t), \mfu(\cdot,t), \mfv(\cdot,t) \in L^\infty([0,L]; \bbR^3).
	\end{gather}
	The element 
	\begin{align}
		(\dot \mfu_d(t), \dot \mfv_d(t), \dot \mfu(\cdot,t), \dot \mfv(\cdot,t)) \in (\bbR^3)^2 \times L^2([0,L]; (\bbR^3)^2)
	\end{align}
	determined by the relations
	\begin{align}
		\mfm &= \p_{\mfu} \hat \psi(\mfu_d, \mfv_d, \mfu, \mfv) + \mfM \dot \mfu, \label{eq:mfmequnif} \\
		\mfn &= \p_{\mfv} \hat \psi(\mfu_d, \mfv_d, \mfu, \mfv) + \mfN \dot \mfv, \label{eq:mfnequnif} \\
		\mfM_d \dot \mfu_d &= -\int_0^L \p_{\mfu_d} \hat \psi(\mfu_d, \mfv_d, \mfu, \mfv) ds, \label{eq:dotmfupequnif} \\
		\mfN_d \dot \mfv_d &= -\int_0^L \p_{\mfv_d} \hat \psi(\mfu_d, \mfv_d, \mfu, \mfv) ds \label{eq:dotmfvpequnif}, 
	\end{align}
	is the unique maximizer in $(\bbR^3)^2 \times L^2([0,L]; (\bbR^3)^2)$ of the functional 
	\begin{align}
		\cl F(\dot \msu_d, \dot \msv_d, \dot \msu, \dot \msv) &= \int_0^L \Bigl ( \dot \mfu \cdot \mfM \dot \mfu 
		+ \dot \mfv \cdot \mfN \dot \mfv 
		+ \dot \mfu_d \cdot \mfM_d \dot \mfu_d + 
		\dot \mfv_d \cdot \mfN_d \dot \mfv_d \Bigr ) ds
	\end{align}
	subject to the constraint
	 \begin{gather}
	\int_0^L \Bigl (
	\dot \mfu \cdot \mfM \dot \mfu + \dot \mfv \cdot \mfN \dot \mfv + 
	\dot \mfu_d \cdot \mfM_d \dot \mfu_d + \dot \mfv_d \cdot \mfN_d \dot \mfv_d \Bigr ) ds
	= \\ \int_0^L \Bigl [ (\mfm - \p_{\mfu} \psi) \cdot \dot \mfu +  
	(\mfn - \p_{\mfv} \psi) \cdot \dot \mfv
	- \p_{\mfu_d} \psi \cdot \dot \mfu_d
	- \p_{\mfv_d} \psi \cdot \dot \mfv_d \Bigr ] ds.  
	\label{eq:constunif} 
\end{gather} 
For the maximizer, we have  
\begin{align}
	\cl F(\dot \msu_d, \dot \msv_d, \dot \msu, \dot \msv) &= \int_0^L 
	|\mfM^{-1/2} (\mfm - \p_{\mfu} \psi)|^2 + 
	|\mfN^{-1/2} (\mfn - \p_{\mfv} \psi)|^2 \, ds \\
	&\qquad + \Bigl |\mfM_d^{-1/2} \int_0^L \p_{\mfu_d} \psi \, ds \Bigr |^2 + \Bigl |\mfN_d^{-1/2} \int_0^L \p_{\mfv_d} \psi \, ds\Bigr |^2.  \label{eq:totdissipationunif} 
\end{align}
\end{prop}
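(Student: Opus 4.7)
The plan is to closely mimic the proof of Proposition~\ref{p:maxdiss}, the only structural change being that the admissible variations $\dot\mfu_d, \dot\mfv_d$ now lie in $\bbR^3$ rather than in $L^2([0,L];\bbR^3)$. First I would verify that the quadruple $(\dot\mfu_d, \dot\mfv_d, \dot\mfu, \dot\mfv)$ determined by \eqref{eq:mfmequnif}--\eqref{eq:dotmfvpequnif} satisfies the constraint \eqref{eq:constunif}. Pairing \eqref{eq:mfmequnif} with $\dot\mfu$ and integrating reproduces the $\dot\mfu \cdot \mfM \dot\mfu$ contribution as in the previous proposition, and likewise for \eqref{eq:mfnequnif}. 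The novelty is in the pairing of \eqref{eq:dotmfupequnif} against $\dot\mfu_d$: because $\dot\mfu_d$ is constant in $s$, the identity $\int_0^L \p_{\mfu_d}\psi \cdot \dot\mfu_d \, ds = \bigl(\int_0^L \p_{\mfu_d}\psi \, ds\bigr) \cdot \dot\mfu_d$ allows one to transfer the $s$-integral from the test function onto the thermodynamic force, matching the right-hand side of \eqref{eq:constunif}; an identical calculation handles \eqref{eq:dotmfvpequnif}.

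For maximality and uniqueness, I would replay the Cauchy--Schwarz argument from Proposition~\ref{p:maxdiss} in the appropriate Hilbert space, namely $(\bbR^3)^2 \times L^2([0,L]; (\bbR^3)^2)$ endowed with the weighted inner product obtained by summing the pointwise $\mfM_d$- and $\mfN_d$-weighted pairings on the uniform factors (times $L$, reflecting the integration of $s$-constants) with the $\mfM$- and $\mfN$-weighted $L^2$ inner products on the $s$-dependent factors. The constraint \eqref{eq:constunif} then takes the form $\|x\|^2 = \langle x, f\rangle$, where $f$ is the ``thermodynamic force'' quadruple whose components are, up to the inner-product weights, exactly the right-hand sides of \eqref{eq:mfmequnif}--\eqref{eq:dotmfvpequnif}. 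Cauchy--Schwarz then gives $\cl F(x) = \|x\|^2 = \langle x, f\rangle \le \|x\|\|f\|$, whence $\|x\| \le \|f\|$, with equality if and only if $x$ is a scalar multiple of $f$; combined with the constraint this forces $x = f$, yielding existence and uniqueness of the maximizer and identifying it with the quadruple from \eqref{eq:mfmequnif}--\eqref{eq:dotmfvpequnif}.

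Lastly, evaluating $\cl F$ at this maximizer reduces to computing $\|f\|^2$; unpacking into the original unweighted variables gives the four squared terms in \eqref{eq:totdissipationunif}, with the two uniform slots contributing $\bigl|\mfM_d^{-1/2}\int_0^L \p_{\mfu_d}\psi \, ds\bigr|^2$ and $\bigl|\mfN_d^{-1/2}\int_0^L \p_{\mfv_d}\psi \, ds\bigr|^2$ \emph{outside} the $s$-integral, reflecting that these test functions are independent of $s$. The main, and essentially only, obstacle is the bookkeeping needed to set up the weighted inner product correctly so that the various factors of $L$ arising from integrating uniform quantities over $[0,L]$ land consistently in the evolution equations \eqref{eq:dotmfupequnif}, \eqref{eq:dotmfvpequnif}; once this is done, the argument is formally identical to that of Proposition~\ref{p:maxdiss}.
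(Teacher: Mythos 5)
Your strategy is the right one and is exactly what the authors intend (the paper itself gives no separate argument, saying only that the result follows ``by a minor variation of the previous proof''): verify that the candidate satisfies the constraint, then run the weighted Cauchy--Schwarz argument in the Hilbert space $(\bbR^3)^2\times L^2([0,L];(\bbR^3)^2)$. However, the ``bookkeeping'' you defer at the end is not merely cosmetic, and as written your verification of the constraint does not go through. With $\dot\mfu_d$ constant in $s$ (and $\mfM_d$ independent of $s$, since its arguments $\mfu_d(t),\mfv_d(t)$ are), the left-hand side of \eqref{eq:constunif} contributes $\int_0^L \dot\mfu_d\cdot\mfM_d\dot\mfu_d\,ds = L\,\dot\mfu_d\cdot\mfM_d\dot\mfu_d$, whereas after substituting \eqref{eq:dotmfupequnif} the right-hand side contributes $-\bigl(\int_0^L\p_{\mfu_d}\psi\,ds\bigr)\cdot\dot\mfu_d = \dot\mfu_d\cdot\mfM_d\dot\mfu_d$. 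These differ by the factor $L$, so the quadruple determined by \eqref{eq:mfmequnif}--\eqref{eq:dotmfvpequnif} does not satisfy \eqref{eq:constunif} unless $L=1$ or $\dot\mfu_d=\dot\mfv_d=\bs 0$; the identity you invoke for transferring the $s$-integral is correct but only handles the right-hand side, not the extra $L$ on the left.

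Carrying out your own Riesz/Cauchy--Schwarz argument to the end: the inner product on the uniform slots carries the weight $L\mfM_d$ (resp.\ $L\mfN_d$), so the Riesz representer of the linear functional on the right of \eqref{eq:constunif} has uniform components $-(L\mfM_d)^{-1}\int_0^L\p_{\mfu_d}\psi\,ds$ and $-(L\mfN_d)^{-1}\int_0^L\p_{\mfv_d}\psi\,ds$. The unique maximizer therefore satisfies $L\,\mfM_d\dot\mfu_d = -\int_0^L\p_{\mfu_d}\psi\,ds$ and $L\,\mfN_d\dot\mfv_d = -\int_0^L\p_{\mfv_d}\psi\,ds$, i.e.\ the evolution equations should involve the average $\frac1L\int_0^L$ rather than $\int_0^L$, and the last two terms of \eqref{eq:totdissipationunif} correspondingly acquire a prefactor $1/L$. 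So your method is sound and structurally identical to the proof of Proposition \ref{p:maxdiss}, but you cannot simultaneously assert that the constraint holds and recover \eqref{eq:dotmfupequnif}--\eqref{eq:dotmfvpequnif} verbatim: the step you yourself flagged as ``the main obstacle'' is precisely where the statement requires either the normalization $L=1$ (which is what happens in the nondimensionalized application of Section 5) or the corrected factor of $L$. A complete proof must either carry out this computation and state the corrected equations, or record the normalization explicitly.
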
 

As advocated by Rajagopal and Srinivasa \cite{RAJAGOPALSRI2000}, we choose the forms of \eqref{eq:constequs} and \eqref{eq:evoleqs} based on the following requirement. 
\begin{assum}\label{assum:rajsri}
 At each fixed time $t$, and for each fixed set of vector fields defined at time $t$, 
$
\mfm, \mfn, \mfu_d, \mfv_d, \mfu, \mfu_d,$ and $\mfv_d$, 
the element $(\p_t \mfu_d, \p_t \mfv_d, \p_t \mfu, \p_t \mfv)$ maximizes the total dissipation rate subject to the constraint \eqref{eq:const1} at time $t$.	
\end{assum}
The only ambiguity in the above requirement is \emph{in what space} the element \\ $(\p_t \mfu_d, \p_t \mfv_d, \p_t \mfu, \p_t \mfv)$ maximizes the total dissipation rate, subject to the constraint \eqref{eq:const1}, at time $t$. The choice of space, $$\mbox{either } L^2([0,L]; (\bbR^3)^4) \mbox{ or } (\bbR^3)^2 \times L^2([0,L]; (\bbR^3)^2),$$  along with the previous two propositions then determine the constitutive relations for the model. 

By Proposition \ref{p:maxdiss} and Proposition \ref{p:maxdissuniform}, our constitutive relations for the contact couple and contact force in terms of the kinematic variables are uniquely given by 
\begin{gather}
	\mfm = \p_{\mfu} \psi + \mfM \p_t \mfu \iff 
	m_k = \p_{u_k} \psi + M_{k\ell} \p_t u_\ell, \quad k = 1, 2, 3, \label{eq:mconsteq} \\
		\mfn = \p_{\mfv} \psi + \mfN \p_t \mfv \iff 
	n_k = \p_{v_k} \psi + N_{k\ell} \p_t v_\ell, \quad k = 1, 2, 3. \label{eq:nconsteq}
\end{gather}

Assumption \ref{assum:rajsri} in the setting of $L^2([0,L]; (\bbR^3)^4)$ 
and Proposition \ref{p:maxdiss} imply that the evolution equations determining the natural configuration of the rod are
\begin{gather}
	\mfM_d \p_t \mfu_d = -\p_{\mfu_d} \psi \iff 
	M_{d, k\ell} \p_t u_{d, \ell} = - \p_{u_{d,k}} \psi, \quad k = 1, 2, 3, \label{eq:unatevol}\\
		\mfN_d \p_t \mfv_d = -\p_{\mfv_d} \psi \iff 
	N_{d, k\ell} \p_t v_{d, \ell} = - \p_{v_{d,k}} \psi, \quad k = 1, 2, 3.
	\label{eq:vnatevol}
\end{gather}
However, Assumption \ref{assum:rajsri} in the smaller space $(\bbR^3)^2 \times L^2([0,L]; (\bbR^3)^4)$ and Proposition \ref{p:maxdissuniform} imply that the evolution equations determining the natural configuration of the rod are  
\begin{gather}
	\mfM_d \p_t \mfu_d = -\int_0^L \p_{\mfu_d} \psi \, ds \iff \\ 
	M_{d, k\ell} \p_t u_{d, \ell} = - \int_0^L \p_{u_{d,k}} \psi \, ds, \quad k = 1, 2, 3, \label{eq:unatevolunif} \\
	\mfN_d \p_t \mfv_d = -\int_0^L \p_{\mfv_d} \psi \, ds \iff \\
	N_{d, k\ell} \p_t v_{d, \ell} = - \int_0^L \p_{v_{d,k}} \psi \, ds, \quad k = 1, 2, 3. \label{eq:vnatevolunif}
\end{gather}
The evolution equations \eqref{eq:unatevolunif} and \eqref{eq:vnatevolunif}, rather than \eqref{eq:unatevol} and \eqref{eq:vnatevol}, are consistent with the simplifying requirement that the natural configuration is homogeneous (independent of the variables $s$), at each time $t$, regardless of the current configuration. Ultimately, it is up to the modeler to adopt \eqref{eq:unatevol}, \eqref{eq:vnatevol} or \eqref{eq:unatevolunif}, \eqref{eq:vnatevolunif} as evolution equations for the natural configuration. 

\subsection{Dissipation rates}
We observe that for the constitutive relations \eqref{eq:mconsteq}, \eqref{eq:nconsteq}, \eqref{eq:unatevol}, \eqref{eq:vnatevol}, we have by \eqref{eq:disspw}, 
\begin{align}
	\xi = \p_t \msu \cdot \msM \p_t \msu + \p_t \msv \cdot \msN \p_t \msv + \p_t \msu_d \cdot \msM_d \p_t \msu_d + \p_t \msv_d \cdot \msN_d \p_t \msv_d \geq 0, 
\end{align} 
i.e., the point-wise dissipation rate is always non-negative. However, for the constitutive relations \eqref{eq:mconsteq}, \eqref{eq:nconsteq}, \eqref{eq:unatevolunif}, \eqref{eq:vnatevolunif}, we have
\begin{align}
	\xi = \p_t \msu \cdot \msM \p_t \msu + \p_t \msv \cdot \msN \p_t \msv + \p_{\msu_d} \psi \cdot \msM_d^{-1} \int_0^L \p_{\msu_d} \psi \, ds + \p_{\msv_d} \psi \cdot \msN_d^{-1} \int_0^L \p_{\msv_d} \psi \, ds, 
\end{align}
and $\int_0^L \xi \, ds \geq 0$, but $\xi$ is not necessarily point-wise non-negative. Indeed, let $s_0 \in [0,L]$. Suppose that $\msM_d$ is diagonal, and $\psi$ is the quadratic energy: 
\begin{align}
	\psi = \frac{1}{2} \sum_{k = 1}^3 \Bigl [
	A_{kk}(u_k - u_{d,k})^2 + B_{kk}(v_k - v_{d,k})^2
	+ A_{d,kk}u_{d,k}^2 + B_{d,kk}(v_{d,k} - \delta_{3k})^2
	\Bigr ],
\end{align}
with positive coefficients $A_{kk}$, $B_{kk}$, $A_{d,kk}$, $B_{d,kk}$, $k = 1,2,3$. Let $u \in C([0,L])$ with $u(s_0) > 0$ and $\int_0^L u(s) ds < 0$. Then with 
\begin{gather}
	\p_t \msu(s,t) = \p_t \msv(s,t) = \bs 0, \quad \msu_d(t) = \msv_d(t) - \bs e_3 = \bs 0, \\
	\msv(s,t) = \bs e_3, \quad \msu(s,t) = u(s) \bs e_3,  
\end{gather}
we have 
\begin{align}
	\xi(s_0,t) = \frac{A_{33}^2}{M_{33}} u(s_0) \int_0^L u(s) ds < 0.  
\end{align}

\subsection{Kelvin-Voight model} 
We observe that if the Helmholtz free energy is independent of the variables $\mfu_d$ and $\mfv_d$, $\psi = \hat \psi(\mfu, \mfv)$, then \eqref{eq:dotmfupeq} and \eqref{eq:dotmfvpeq} imply that the natural configuration is constant throughout the motion,
\begin{align}
	\mfu_d(\cdot, t) = \hat \mfu, \quad \mfv_d(\cdot, t) = \hat \mfv, \quad \forall t.
\end{align}
The constitutive equations for $\mfm$ and $\mfn$ are then given by classical Kelvin-Voight relations
\begin{align}
	\mfm = \p_{\mfu}\hat \psi(\mfu, \mfv) + \mfM \p_t \mfu, \\
	\mfn = \p_{\mfv}\hat \psi(\mfu, \mfv) + \mfN \p_t \mfv. 
\end{align} 
In particular, the limiting case $\mfM = \mfN = \bs 0$ yields the classical hyperelastic relations (see \cite{AntmanBook})
\begin{align}
	\mfm = \p_{\mfu}\hat \psi(\mfu, \mfv), \\
	\mfn = \p_{\mfv} \hat \psi(\mfu, \mfv). 
\end{align}

\subsection{Inextensible and unshearable rods}\label{s:inextunshconst}

The previous discussion was for extensible and shearable rods. We now briefly sketch here the simple changes needed to obtain constitutive relations if the rod is unshearable or if the rod is inextensible and unshearable. We will only give the sketch for general inhomogeneous evolution of the natural configuration, the changes for homogeneous evolution being clear.  

If the rod is unshearable, then we have the constraints 
\begin{align}
	v_{d,1} = v_{d,2} = v_{1} = v_2 = 0. \label{eq:unshconst}
\end{align}
The Helmholtz free energy and total dissipation rate then take the forms
\begin{align}
	\psi &= \hat \psi(\mfu_d, v_{d,3}, \mfu, v_3), \\
	\int_0^L \xi \, ds &= \int_0^L \Bigl (\p_t \mfu \cdot \mfM \p_t \mfu + M_{33} \p_t v_{3}^2 + 
	\p_t \mfu_d \cdot \mfM_d \p_t \mfu_d + M_{d,33} \p_t v_{d,3}^2 \Bigr ) ds,
\end{align}
and they satisfy the constraint
\begin{gather}
	\int_0^L (\p_t \mfu \cdot \mfM \p_t \mfu + M_{33} \p_t v_{3}^2 + 
	\p_t \mfu_d \cdot \mfM_d \p_t \mfu_d + M_{d,33} \p_t v_{d,3}^2 ) \, ds = \\
	\int_0^L \Bigl [ (\mfm - \p_{\mfu} \psi) \cdot \p_t \mfu +  
	(n_3 - \p_{v_3} \psi) v_3 
	- \p_{\mfu_d} \psi \cdot \p_t \mfu_d
	- (\p_{v_{d,3}} \psi) v_{d,3} \Bigr ] ds. 
\label{eq:constun} 
\end{gather}
Given $\mfm$, $\mfn$, $\mfu_d,$ $v_{d,3},$ $\mfu,$ and $v_3$, we maximize the total dissipation rate subject to the constraint \eqref{eq:constun} and arrive at the relations 
	\begin{align}
	\mfm &= \p_{\mfu} \hat \psi(\mfu_d, v_{d,3}, \mfu, v_3) + \mfM \p_t \mfu, \label{eq:mfmequ} \\
	n_3 &= \p_{v_3} \hat \psi(\mfu_d, v_{d,3}, \mfu, v_3) + M_{33} \p_t 
	v_{3}, \\
	\mfM_d \p_t \mfu_d &= -\p_{\mfu_d} \hat \psi(\mfu_d, v_{d,3}, \mfu, v_3), \label{eq:dotmfupequ} \\
	M_{d,33} \p_t v_{d,3} &= -\p_{v_{d,3}} \hat \psi(\mfu_d, v_{d,3}, \mfu, v_3). \label{eq:dotmfvpequ}
\end{align}
The components $n_1$ and $n_2$ are the indeterminate parts of $\mfn$ enforcing \eqref{eq:unshconst} during the motion of the rod. 

If the rod is inextensible and unshearable, then we have the constraints 
\begin{align}
	\mfv = \mfv_d = \bs e_3. \label{eq:inunshconst}
\end{align}
The Helmholtz free energy and dissipation rate then take the forms
\begin{align}
	\psi &= \hat \psi(\mfu_d, \mfu), \\
	\int_0^L \xi \, ds &= \int_0^L \Bigl (\p_t \mfu \cdot \mfM \p_t \mfu + 
	\p_t \mfu_d \cdot \mfM_d \p_t \mfu_d \Bigr ) ds,
\end{align}
and they satisfy the constraint
\begin{gather}
		\int_0^L ( \p_t \mfu \cdot \mfM \p_t \mfu + 
		\p_t \mfu_d \cdot \mfM_d \p_t \mfu_d ) ds
		= \\ \int_0^L \Bigl [ (\mfm - \p_{\mfu} \psi) \cdot \p_t \mfu
		- \p_{\mfu_d} \psi \cdot \p_t \mfu_d \Bigr ] ds. 
\label{eq:constinun} 
\end{gather}
Given $\mfm$, $\mfn$, $\mfu_d$ and $\mfu$, we maximize the total dissipation rate subject to the constraint \eqref{eq:constinun} and arrive at the relations 
\begin{align}
	\mfm &= \p_{\mfu} \hat \psi(\mfu_d, \mfu) + \mfM \p_t \mfu, \label{eq:mfmeqinu} \\
	\mfM_d \p_t \mfu_d &= -\p_{\mfu_d} \hat \psi(\mfu_d, \mfu). \label{eq:dotmfupeqinu}
\end{align}
The vector $\mfn$ is indeterminate and enforces \eqref{eq:inunshconst} during the motion of the rod. 

\section{Isolated torsion}

In the final section of this paper, we treat the problem of isolated torsion for a uniform rod with $(\rho A)(s)$, $I_{11}(s)$ and $I_{22}(s)$ fixed positive constants. We assume a quadratic free energy $\psi$ of the form, 
\begin{align}
\psi = \frac{1}{2} \sum_{k = 1}^3 \Bigl [
&A_{kk}(u_k - u_{d,k})^2 + B_{kk}(v_k - v_{d,k})^2 \\
&+ A_{d,kk}u_{d,k}^2 + B_{d,kk}(v_{d,k} - \delta_{3k})^2
\Bigr ], \label{eq:quadenergy}
\end{align} 
with positive coefficients $A_{kk}$, $B_{kk}$, $A_{d,kk}$, $B_{d,kk}$, $k = 1,2,3$. In addition, we assume that $\mfM, \mfM_d, \mfN$ and $\mfN_d$ are constant, positive definite, diagonal tensors.  We show that in contrast to the viscoelastic models considered in \cite{AntmanBook, AntmanSeidman05, LinnLangTuganov}, our model predicts both of the commonly observed phenomena of solid-like stress relaxation and creep.

\subsection{Dimensionless equations}

For isolated torsion the configuration satisfies 
\begin{gather}
	\mfu_d(s,t) = u_{d}(s,t) \bs e_3, \quad \mfu(s,t) = u(s,t) \bs e_3, \\
	\mfv(s,t) = \mfv_d(s,t) = \bs e_3, \quad \bs d_3(s,t) = \bs e_3, 
\end{gather}
and thus, 
\begin{gather}
	\p_s \bs r(s,t) = \bs d_3(s,t) = \bs e_3, \\
	\bs d_1(s,t) = \cos \phi(s,t) \bs e_1 + \sin \phi(s,t) \bs e_2, \quad 
	\bs d_2(s,t) = \bs d_3(s,t) \times \bs d_1(s,t),
\end{gather}
where $u_3 = u = \p_s \phi$. In what follows, we fix the orientation of the directors at the end $s = 0$ by requiring 
\begin{align}
	\phi(0,t) = 0 \iff \phi(s,t) = \int_0^s u(\sigma,t) d\sigma. 
\end{align}
See Figure \ref{fig:3}.  

\begin{figure}[t]
	\centering
	\includegraphics[width=\linewidth]{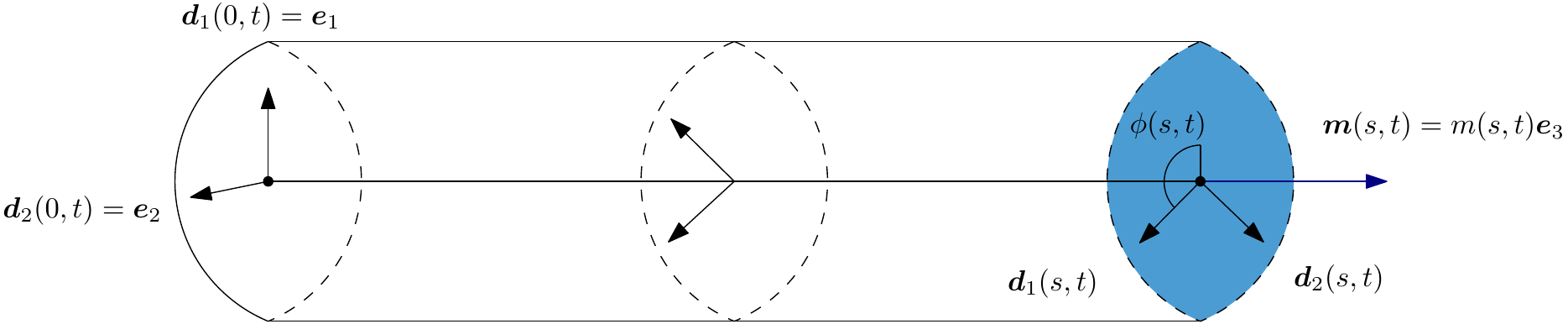}
	\caption{The set-up for isolated torsion.}
		\label{fig:3}
\end{figure}

Let $T$ be a characteristic time scale, and define dimensionless variables and quantities via
\begin{gather}
\bar s = \frac{s}{L}, \quad \bar t = \frac{t}{T}, \quad \bar \phi = \phi, \quad 
\bar u = L u, \quad \bar u_d = L u_d, \\
\bar{\hat \psi}(\bar u, \bar u_d) = \frac{T^2}{L^2 (\rho A)} \hat \psi(u_d \bs e_3, \bs e_3, u \bs e_3, \bs e_3), \\
\bar \mu_d = \frac{T M_{33}}{(\rho A) L^4}, \quad 
\bar \mu_d = \frac{T M_{d, 33}}{(\rho A) L^4}, \quad
\bar \nu = \frac{I_{II} + I_{22}}{(\rho A) L^2}. 
\end{gather}
By \eqref{eq:quadenergy}, the field equations \eqref{eq:linmom}, \eqref{eq:angmom} and the constitutive relations and evolution equations reduce to partial differential equations in dimensionless form (and dropping the over-bars)
\begin{align}
	\nu \p_t^2 \phi(s,t) &= \alpha [\p^2_s \phi(s,t) - \p_s u_d(s,t)] + \mu \p_{t} \p_s^2 \phi(s,t), \label{eq:phieq}\\ 
	\phi(0,t) &= 0, \\
	\mu_d \p_t u_d(s,t) &= \alpha [\p_s \phi(s,t) - u_d(s,t)] - \alpha_d u_d(s,t)\, \label{eq:upeq}
\end{align}
where $\mu, \mu_d, \nu, \alpha, \alpha_d \in (0,\infty)$. At the end $s = 1$, we prescribe the applied torsion $m(t)$ so that 
\begin{align}
	\alpha [\p_s \phi(1,t) - u_d(1,t)] + \mu \p_t \p_s \phi(1,t) = m(t). 
\end{align}
As discussed in Section 4, constraining the natural configuration to be homogeneous requires replacing \eqref{eq:upeq} with 
\begin{align}
	\mu_d \p_t u_d(t) = \int_0^1 \alpha [\p_s \phi(s,t) - u_d(t)] \, ds - \alpha u_d(t). \label{eq:upequniform}
\end{align}

The dimensionless constant $\nu$ can be interpreted as the ratio of the normalized second moment of area of the rod's cross section to the square of the length of the rod.\footnote{For a uniform rod of length $L$ and circular cross sections $A = \{(x_1, x_2) \mid x_1^2 + x_2^2 \leq \ell^2 \}$ of radius $\ell$, one can identify $$\nu = \frac{1}{L^2 (\pi \ell^2)} \int_{A} (x_1^2 + x_2^2) dx_1 dx_2 = \frac{1}{2} \frac{\ell^2}{L^2} \ll 1.$$ See Chapter 8 of \cite{AntmanBook}.} Since a rod is considered a slender body, it is therefore appropriate to assume that 
$
	\nu \ll 1.$ 
Quasi-static motion refers to the singular limiting case that $\nu = 0$, and the field equations reduce to \eqref{eq:upeq} (or \eqref{eq:upequniform}) and
\begin{align}
	\al(u(s,t) - u_d(s,t)) + \mu \p_t u(s,t) = m(t), \label{eq:quasistatic}
\end{align} 
where $m(t)$ is the applied torsion at the end $s = 1$ and $u = \p_s \phi$. If the current twist $u(t)$ and natural twist $u_d(t)$ are homogeneous in $s$, then $\phi(1,t) = u(t)$, and equations \eqref{eq:upeq} and \eqref{eq:upequniform} are identical.   

\subsection{Quasi-static motion: stress relaxation}

We now further consider quasi-static motion with homogeneous twist $u(t)$ and natural twist $u_d(t)$, and we show that the model predicts the two commonly observed viscoelastic phenomena of solid-like stress relaxation and creep.
In this setting, the field equations are 
\begin{gather}
	\mu \dot u(t) + \al u(t) - \al u_d(t) = m(t), \label{eq:twist}\\
	\mu_d \dot u_d(t) - \al u(t) + (\al + \al_d) u_d(t) = 0,  \label{eq:nattwist}
\end{gather}
where $\dot{\mbox{}} := \frac{d}{dt}$. 

We first consider the problem of determining the torsion $m(t)$ corresponding to a prescribed continuously differentiable twist history $\{ u(\tau) \mid \tau \in (-\infty, \infty) \}$ satisfying $u(\tau) = 0$ for all $\tau \in (-\infty, 0)$. We will also assume that the natural twist satisfies $u_d(\tau) = 0$ for all $\tau \in (-\infty,0)$.  Then \eqref{eq:nattwist} implies that 
\begin{align}
	u_d(t) = \frac{\al}{\mu_d} \int_0^t e^{-\frac{\al + \al_d}{\mu_d}(t - \tau)} u(\tau) d\tau, \quad t \in (-\infty, \infty),
\end{align}
whence 
\begin{align}
	m(t) = \mu \dot u(t) + \al u(t) - \frac{\al^2}{\mu_d} \int_0^t e^{-\frac{\al + \al_d}{\mu_d}(t - \tau)} u(\tau) d\tau, \quad t \in (-\infty, \infty). \label{eq:apptorsion}
\end{align}

Let 
\begin{align}
	H(\tau) = \begin{cases}
		0 &\mbox{ if } \tau \in (-\infty,0), \\
		1 &\mbox{ if } \tau \in [0,\infty),
	\end{cases}
\end{align}
be the Heaviside function. The applied torsion corresponding to the idealized step twist history $u(\tau) = u_0 H(t)$ 
is defined to be the limit, in the sense of distributions on $\bbR$, of the sequence of applied torsions $\{ m_n \}_n$ corresponding to a sequence of continuously differentiable, non-decreasing twist histories $\{ u_n \}_n$ satisfying 
\begin{align}
	u_n(\tau) = \begin{cases}
		0 &\mbox{ if } \tau \in (-\infty, 0), \\
		u_0 &\mbox{ if } \tau \in (1/n, \infty). 
	\end{cases}
\end{align}
We conclude that for $t \in [0,\infty)$,
\begin{align}
	u_d(t) &= \frac{\al}{\al + \al_d} u_0 \Bigl (1 - e^{-\frac{\al + \al_d}{\mu_d}t} \Bigr ), \\
	m(t) &= \mu u_0 \delta(t) + \frac{\al \al_d}{\al + \al_d} u_0 + \frac{\al^2}{\al + \al_d} u_0 e^{-\frac{\al + \al_d}{\mu_d}t}. \label{eq:stressrelax}
\end{align}
In particular, the asymptotic natural twist is given by 
\begin{align}
	u_d(\infty) = \frac{\al}{\al + \al_d}u_0,  
\end{align}
and for $u_0 > 0$, $m(t)$ is monotone decreasing on $(0,\infty)$ from $m(0^+) = \al u_0$ to 
$$m(\infty) = \frac{\al \al_d}{\al + \al_d}u_0.$$ Thus, the rod displays a form of \emph{solid-like stress relaxation}, a common viscoelastic phenomenon unaccounted for by the viscoelastic Cosserat rod models considered in \cite{AntmanBook, AntmanSeidman05, LinnLangTuganov}. We remark that it is clear from \eqref{eq:stressrelax} that stress relaxation persists in the case that dissipation is solely due to the evolution of the natural configuration, i.e., $\mu = 0$. 

\subsection{Quasi-static motion: creep}

Now we determine the twist $u(t)$ and natural twist $u_d(t)$ corresponding to a prescribed continuously differentiable torsion history $\{ m(\tau) \mid \tau \in (-\infty, \infty) \}$ satisfying $m(\tau) = 0$ for all $\tau \in (-\infty, 0)$. We rewrite \eqref{eq:twist} and \eqref{eq:nattwist} as 
\begin{gather}
	\frac{d}{dt}
	\begin{pmatrix}
		u(t) \\
		u_d(t) 
	\end{pmatrix}
= -\bs A 
	\begin{pmatrix}
	u(t) \\
	u_d(t) 
\end{pmatrix}
+ 	\begin{pmatrix}
	m(t)/\mu\\
	0
\end{pmatrix}, \\
\bs A = 
\begin{pmatrix}
{\al}/{\mu} & -{\al}/{\mu} \\
-{\al}/{\mu_d} &{(\al + \al_d)}/{\mu_d}
\end{pmatrix}. 
\end{gather} 
Observe that $\det \bs A = (\al \al_d)/(\mu \mu_d) > 0$, so $\bs A$ is non-singular.
If we assume that $u(t) = u_d(t) = 0$ for $t < 0$, we have by Duhamel's formula  
\begin{align}
	\begin{pmatrix}
	u(t) \\
	u_d(t) 
	\end{pmatrix}
= 
\int_0^t e^{-(t-\tau)\bs A} \begin{pmatrix}
	m(\tau)/\mu \\
	0
\end{pmatrix} d\tau, \quad t \in (-\infty, \infty). 
\end{align}
Let 
\begin{align}
	\delta(\al_d, \mu_d, \al, \mu) &= 
	\Bigl |
	\det (\bs A - \frac{1}{2} (\tr \bs A) \bs I)
	\Bigr |^{1/2} \\
	&=
	\Bigl [
	\Bigl 
	(
	\frac{\al}{2\mu} - \frac{\al + \al_d}{2\mu_d}
	\Bigr )^2 + \frac{\al^2}{\mu \mu_d} \Bigr ]^{1/2}
\end{align}
The matrix exponential is given explicitly by  
\begin{gather}
	e^{-t \bs A} = e^{-\Bigl (
		\frac{\al}{2\mu} + \frac{\al + \al_d}{2\mu_d}
		\Bigr )t} \cosh \bigl [\delta(\al_d, \mu_d, \al, \mu) t \bigr ]
	\begin{pmatrix}
		1 & 0 \\
		0 & 1 
	\end{pmatrix} \\
-  e^{-\Bigl (
	\frac{\al}{2\mu} + \frac{\al + \al_d}{2\mu_d}
	\Bigr )t} \sinh \bigl [\delta(\al_d, \mu_d, \al, \mu) t \bigr ]
\begin{pmatrix}
	\frac{\al}{2\mu} - \frac{\al+\al_d}{2\mu_d} & -\frac{\al}{\mu} \\
	-\frac{\al}{\mu_d} & \frac{\al + \al_d}{2\mu_d} - \frac{\al}{2\mu}.
\end{pmatrix} 
\end{gather}
and since $\frac{\al}{2\mu} + \frac{\al+\al_d}{2\mu_d} > \delta(\al_d, \mu_d, \al, \mu)$, we have 
\begin{align}
	\lim_{t \rar \infty} e^{-t\bs A} \bs a = \bs 0, \quad \bs a \in \bbR^3. \label{eq:vanishingmatexp} 
\end{align}

The twist and natural twist corresponding to the idealized step torsion history $m(\tau) = m_0 H(t)$ 
is defined to be the limit, in the sense of distributions on $\bbR$, of the sequences of twists $\{ u_n \}$ and natural twists $\{ u_{d,n} \}_n$ corresponding to a sequence of continuously differentiable, non-decreasing torsion histories $\{ m_n \}_n$ satisfying 
\begin{align}
	m_n(\tau) = \begin{cases}
		0 &\mbox{ if } \tau \in (-\infty, 0), \\
		m_0 &\mbox{ if } \tau \in (1/n, \infty). 
	\end{cases}
\end{align}
We conclude that for $t \in [0,\infty)$,
\begin{align}
	\begin{pmatrix}
		u(t) \\
		u_d(t) 
	\end{pmatrix}
= \Bigl (
\bs I - e^{-t\bs A}
\Bigr ) \bs A^{-1} 
\begin{pmatrix}
	m_0/ \mu \\
	0
\end{pmatrix} \label{eq:creepform}
\end{align}
By \eqref{eq:creepform} and \eqref{eq:vanishingmatexp} we conclude that the twist and natural twist start at $u(0) = 0, u_d(0) = 0$ and asymptotically tend to 
\begin{align}
	u(\infty) = \frac{\al + \al_d}{\al \al_d} m_0, \quad
	u_d(\infty) = 
	\frac{1}{\al_d} m_0, \label{eq:nattwistasymp}
\end{align}
a form of {\emph{solid-like creep}. 
	
In the case that $\mu = 0$, the previous computations do not apply, but the calculations are simpler. Indeed, if $\mu = 0$, then \eqref{eq:twist} and \eqref{eq:nattwist} are equivalent to  
\begin{gather}
	\alpha (u(t) - u_d(t)) = m(t), \\
	\mu_d \dot u_d(t) + \al_d u_d(t) = m(t). 
\end{gather} 
Assuming that $u(t) = u_d(t) = 0$ for $t < 0$, we have 
\begin{align}
	u(t) &= m(t) + \frac{\al}{\mu_d} \int_0^t e^{{-\al_d}{\mu_d}(t - \tau)}m(\tau) d\tau, \\
	u_d(t) &= \frac{1}{\mu_d} \int_0^t e^{-\frac{\al_d}{\mu_d}(t-\tau)} m(\tau) d\tau.
\end{align}
In the case of an idealized step torsion history $m(\tau) = m_0 H(t)$, we conclude that for $t \in [0,\infty)$, 
\begin{align}
	u(t) &= \frac{1}{\al} m_0 + \frac{1}{\al_d}\Bigl ( 1 - e^{-\frac{\al_d}{\mu_d} t} \Bigr ) m_0, \\
	u_d(t) &= \frac{1}{\al_d}\Bigl ( 1 - e^{-\frac{\al_d}{\mu_d} t} \Bigr ) m_0,
\end{align} 
and the twist and natural twist are strictly increasing in time with the same asymptotic values given in  \eqref{eq:nattwistasymp}. 

\bibliographystyle{plain}
\bibliography{researchbibmech}
\bigskip

\centerline{\scshape K. R. Rajagopal}
\smallskip
{\footnotesize
	\centerline{Department of Mechanical Engineering, Texas A\&M University}
	
	\centerline{College Station, TX 77843, USA}
	
	\centerline{\email{krajagopal@tamu.edu}}
}

\bigskip

\centerline{\scshape C. Rodriguez}
\smallskip
{\footnotesize
	\centerline{Department of Mathematics, University of North Carolina}
	
	\centerline{Chapel Hill, NC 27599, USA}
	
	\centerline{\email{crodrig@email.unc.edu}}
}

\end{document}